

\documentclass[pra,amsmath,amsfonts,superscriptaddress,showpacs]{revtex4-1}

\usepackage{epsf,epsfig}
\usepackage[psamsfonts]{amssymb}
\usepackage{amsmath}
\usepackage{graphicx}
\usepackage{color,soul}
\usepackage{xcolor}

\newtheorem{theorem}{Theorem}
\newtheorem{lemma}[theorem]{Lemma}
\newtheorem{proposition}[theorem]{Proposition}
\newtheorem{corollary}[theorem]{Corollary}
\newtheorem{remark}[theorem]{Remark}
\newenvironment{proof}[1][Proof]{\begin{trivlist}
\item[\hskip \labelsep {\bfseries #1}]}{\end{trivlist}}

 \newcommand{\ket}[1]{|#1\rangle}
 \newcommand{\bra}[1]{\langle #1|}
 \newcommand{\project}[1]{\ket{#1}\bra{#1}}

 \newcommand{\Id}{{\mathbb I}}
 
 \newcommand{\rank}{{\mathrm rank}}
 \newcommand{\R}{{\mathrm R}}
 \newcommand{\T}{{\mathrm t}}

\newcommand{\Tr}{{\mathrm {Tr}}}
\newcommand{\diag}{{\mathrm {diag}}}

\newcommand{\etal}{\textit {et al.} }

 \newcommand{\half}{\frac{1}{2}}


\begin{document}
\title{Computable  measure of the  quantum correlation}
\author{S. Javad Akhtarshenas\footnote{akhtarshenas@um.ac.ir}}
\affiliation{Department of Physics, University of Isfahan,
 Isfahan, Iran}
\affiliation{Quantum Optics Group, University of Isfahan,
 Isfahan, Iran}
 \affiliation{Department of Physics, Ferdowsi University of Mashhad,
 Mashhad, Iran}
\author{Hamidreza Mohammadi\footnote{h.mohammadi@sci.ui.ac.ir}}\affiliation{Department of Physics, University of Isfahan,
 Isfahan, Iran}
\affiliation{Quantum Optics Group, University of Isfahan,
 Isfahan, Iran}
\author{Saman Karimi\footnote{s.karimi@sci.ui.ac.ir}}\affiliation{Department of Physics, University of Isfahan,
 Isfahan, Iran}
\author{Zahra Azmi\footnote{z.azmi@sci.ui.ac.ir}}
\affiliation{Department of Physics, University of Isfahan,
 Isfahan, Iran}

\begin{abstract}
 A general state of an $m\otimes n$ system is a classical-quantum state if and only if its associated  $A$-correlation matrix (a matrix constructed from the coherence vector of the party $A$,  the correlation matrix of the state, and a function of the local coherence vector  of the subsystem $B$),   has rank no larger than $m-1$.   Using the general Schatten $p$-norms, we quantify  quantum correlation by measuring any violation of this condition.   The required minimization can be carried out for the general $p$-norms and any function of the local coherence vector  of the unmeasured subsystem, leading to  a class of computable quantities which can be used to capture the quantumness of  correlations due to the  subsystem $A$.   We introduce two special members of these quantifiers; The first one  coincides with the tight lower bound on the geometric measure of discord, so that  such  lower bound fully captures the quantum correlation of a bipartite system. Accordingly, a vanishing tight lower bound on the geometric discord is a necessary and sufficient condition for a state to be zero-discord. The second quantifier has the property that it is invariant under a local and reversible operation performed on the unmeasured subsystem, so that it can be regarded as a computable well-defined measure of the quantum correlations. The approach presented in this paper provides a way to circumvent the problem with the geometric discord. We provide some examples to exemplify this measure.
\end{abstract}

\keywords{Quantum discord, Geometric discord, $A$-correlation matrix}

\pacs{03.67.-a, 03.65.Ta, 03.65.Ud}
\maketitle


\section{Introduction}
Quantum discord represents a new type of quantum correlation which looks at the  correlations from a new perspective, i.e. measurement theory, different from the entanglement-separability paradigm \cite{Zurek2001,Henderson2001}.  The idea of quantum discord is based on the fact that while in the classical physics  measurements can be carried out without disturbance, in quantum mechanics measurements often disturb the system and the
disturbance can be exploited to quantify the quantumness of correlations therein \cite{LuoPRA2008-MIN,LuoFuPRL2011}.
Hence, there exist separable (not-entangled) states which have non-zero discord \cite{LiLuoPRA2008} such that one can employ these separable states as a resource to enhance the quality of quantum information and computation processing
\cite{Cubitt2003,DakicNature2012}. For instance, the  deterministic quantum computation with one qubit  demonstrates such a
speedup without entanglement \cite{Knill1998}. An operational interpretation of quantum discord in terms of state merging is proposed in \cite{Madhok2011,Cavalcanti2011}.  Nowadays, quantum discord became a subject of intensive study in different contexts \cite{Modi2012,LuoSun2010}  and different versions of quantum discord and their measures have been introduced and analyzed \cite{Modi2012,Brodutch2010}. Since the evaluation of quantum discord involves an optimization procedure, almost all quantum discord  measures are very difficult to calculate analytically and quantum discord  was analytically computed only for a few families of two-qubit states \cite{Luo2008, Dillenschneider2008, Sarandy2009, Ali2010, Adesso2010, Giorda2010}, some reduced two-qubit states of pure three-qubit states,  and  a class of rank-2 mixed state of $4\otimes 2$ systems \cite{Cen2011}. Among the various measures of quantum discord, the geometric discord, has been firstly
proposed  by  Dakic {\it et al.}, is a simple and intuitive quantifier of general non-classical correlations \cite{Dakic2010}. Geometric discord  is defined as the squared Hilbert-Schmidt distance between the state of the quantum system and the closest zero-discord state. For a bipartite state  $\rho$ on $\mathcal{H}^A \otimes \mathcal{H}^B$, with $\dim{\mathcal{H}^{A}}=m$ and $\dim{\mathcal{H}^{B}}=n$, the geometric discord  is defined by \cite{Dakic2010}
\begin{equation} \label{GD-Dakic}
D_G(\rho)= \underset{\chi\in\Omega_{0}}{\min} \|\rho-\chi\|^2,
\end{equation}
where  $\Omega_0$ denotes the set of all zero-discord states and $ \|X-Y\|^2=\Tr(X-Y)^2 $ is the 2-norm or square norm in the Hilbert-Schmidt space. This quantity vanishes on the classical-quantum states. It is shown that the geometric discord has an operational interpretation in terms of the average fidelity of the remote state preparation protocol for two-qubit states \cite{DakicNature2012,Giorgi2013}. Dakic \etal also obtained a closed formula for the geometric discord of an arbitrary two-qubit state in terms of coherence vectors and correlation matrix of the state.
Furthermore, an exact expression for the pure $m\otimes m$ states and arbitrary $2\otimes n$ states are obtained \cite{LuoFu2010,LuoFuPRL2011,LuoFu2012}.

An alternative equivalent form for the geometric discord is introduced by Luo and Fu \cite{LuoFu2010}
\begin{equation}\label{GD-Luo2}
D_G(\rho)= \underset{\Pi^A}{\min} \|\rho-\Pi^A(\rho)\|^2,
\end{equation}
where the minimum is taken over all von Neumann measurements $\Pi^A=\{\Pi^A_k\}_{k=1}^m$ on ${\mathcal H}^A$, and $\Pi^A(\rho)=\sum_{k=1}^m(\Pi^A_k\otimes \Id)\rho(\Pi^A_k\otimes \Id)$ with $\mathbb{I}$ as the identity operator on the appropriate space. They have also shown that Eq. (\ref{GD-Luo2}) is equivalent to \cite{LuoFu2010}
\begin{eqnarray}\label{GD-Luo1}
D_G(\rho)=\Tr{(CC^{\T})}-\max_{A}\Tr{(ACC^{\T}A^{\T})},
\end{eqnarray}
where $\T$ denotes transpose,  and $C=(c_{ij})$ is an $m^2\times n^2$-dimensional  matrix defined by
\begin{eqnarray}\label{Rho-Bipart1}
\rho=\sum_{i=0}^{m^2-1}\sum_{j=0}^{n^2-1} c_{ij}X_i\otimes Y_j,
\end{eqnarray}
with $\{X_i\}_{i=1}^{m^2-1}$ and $\{Y_j\}_{j=1}^{n^2-1}$ as the sets of Hermitian operators which constitute orthonormal basis for $SU(m)$ and $SU(n)$ algebra, respectively, i.e.
\begin{eqnarray}
\Tr(X_iX_{i^\prime})=\delta_{i i^\prime},\qquad \Tr(Y_j Y_{j^\prime})=\delta_{j j^\prime}.
\end{eqnarray}
In Eq. (\ref{GD-Luo1}), the maximum is taken over all $m\times m^2$-dimensional matrices $A=(a_{ki})$ such that
\begin{eqnarray}\label{aki}
a_{ki}=\Tr{(\ket{k}\bra{k}X_i)}=\bra{k}X_i\ket{k},
\end{eqnarray}
where $\{\ket{k}\}_{k=1}^{m}$ is any orthonormal base for $\mathcal{H}^{A}$.

For further use, let us give another useful  representation for  a general  bipartite state
$\rho$ on $\mathcal{H}^{A}\otimes \mathcal{H}^{B}$  as
\begin{eqnarray}\label{Rho-Bipart2}
\rho=\frac{1}{mn}\left(\Id\otimes \Id +\vec{x}\cdot\hat{\lambda}^A\otimes \Id+\Id\otimes {\vec y}\cdot\hat{\lambda}^B
+\sum_{i=1}^{m^2-1}\sum_{j=1}^{n^2-1}t_{ij} \hat{\lambda_{i}^A}\otimes \hat{\lambda}_{j}^B\right),
\end{eqnarray}
where
 $\{\hat{\lambda}_i^A\}_{i=1}^{m^2-1}$ and  $\{\hat{\lambda}_j^B\}_{j=1}^{n^2-1}$ are generators of $SU(m)$ and $SU(n)$, respectively, fulfilling the following relations
\begin{eqnarray}\label{SUmGellMann}
 \Tr{\hat{\lambda}_i^s}=0,\qquad \Tr(\hat{\lambda}_i^s\hat{\lambda}_j^s)=2\delta_{ij}, \qquad s=A,B.
\end{eqnarray}
Here $\Id$ stands for the identity operator, $\vec{x}=(x_1,\cdots,x_{m^2-1})^{\T}$ and $\vec{y}=(y_1,\cdots,y_{n^2-1})^{\T}$ are local coherence vectors of the subsystems $A$ and $B$, respectively
\begin{eqnarray}
x_i &=&\frac{m}{2}\Tr{\left[(\hat{\lambda}_{i}^A\otimes \Id)\rho\right]},\quad
y_j =\frac{n}{2}\Tr{\left[(\Id\otimes \hat{\lambda}_{j}^B)\rho\right]},
\end{eqnarray}
and $T=(t_{ij})$ is the correlation matrix
\begin{eqnarray}
t_{ij}=\frac{mn}{4}\Tr{\left[(\hat{\lambda}_{i}^A\otimes \hat{\lambda}_{j}^B)\rho\right]}.
\end{eqnarray}
The two representations (\ref{Rho-Bipart1}) and (\ref{Rho-Bipart2}) of $\rho$ are related to each by the following relation \cite{Rana2012,Hassan2012}
\begin{eqnarray}\label{C}
C=\frac{1}{\sqrt{mn}}
\left(\begin{array}{cc} 1 & \sqrt{\frac{2}{n}}\vec{y}^{\T} \\
\sqrt{\frac{2}{m}}\vec{x} & \frac{2}{\sqrt{mn}}T
\end{array}
\right).
\end{eqnarray}
Based on the definition (\ref{GD-Luo1}),  Rana \etal \cite{Rana2012} and Hassan \etal \cite{Hassan2012} have obtained a tight lower bound on the geometric discord  as
\begin{eqnarray}\label{TightLowerBound}
D_G(\rho)\ge \frac{2}{m^2 n}\left(\|\vec{x}\|^2+\frac{2}{n}\|T\|^2- \sum_{k=1}^{m-1}\eta_{k}^\downarrow\right)
= \frac{2}{m^2 n} \sum_{k=m}^{m^2-1}\eta_{k}^\downarrow,
\end{eqnarray}
where  $\{\eta_{k}^\downarrow\}_{k=1}^{m^2-1}$ are eigenvalues of \cite{Rana2012,Hassan2012}
\begin{equation}\label{G}
G:=\vec{x}\vec{x}^{\T}+\frac{2}{n}TT^{\T},
\end{equation}
in nonincreasing order. Remarkably, the above lower bound on the geometric discord is tight in the sense that
for $m\otimes m$ Werner and isotropic states, the above lower bound are achieved \cite{LuoFu2010,Rana2012}.  Furthermore, for an arbitrary state of $2\otimes n$ systems, the geometric discord coincides with this lower bound \cite{LuoFuPRL2011,Saj2012}.

Using Eq. (\ref{GD-Luo1}), it may be interesting to mention here that one can also write geometric discord (\ref{GD-Dakic}) in the following equivalent form
\begin{eqnarray}\label{GD-AlternativeForm2-0}
D_{G}(\rho)=\frac{2}{m^2 n}\left[\Tr{G}-\max_{\{\vec{\mu}_k\}} \sum_{k=1}^{m}\vec{\mu}_k^{\T} G \vec{\mu}_k\right],
\end{eqnarray}
where $G$ is defined by Eq. (\ref{G}) and  maximum is taken over all simplexes $\Delta^{m-1}_{\{\vec{\mu}_k\}\in \mathbb{R}^{m^2-1}}$, i.e. all vectors $\{\vec{\mu}_k\}_{k=1}^{m}\in \mathbb{R}^{m^2-1}$ fulfilling conditions
 $\vec{\mu}_k\cdot\vec{\mu}_{k^\prime}=\left(\delta_{kk^\prime}-\frac{1}{m}\right)$ and $\sum_{k=1}^{m}\vec{\mu}_k=\vec{0}$. For a proof of Eq. (\ref{GD-AlternativeForm2-0})  see Appendix \ref{AppendixTight}. For $m=2$, Eq. (\ref{GD-AlternativeForm2-0}) immediately leads to the geometric discord of $2\otimes n$ states.

As it is clear from the above discussion, the most important future of geometric discord is its computability which is appreciated for use of the Hilbert-Schmidt metric as a measure of distance. However, as it is pointed out by Piani \cite{Piani2012}, geometric discord may increase under local operations on the unmeasured subsystem, so  it can  not be the best conceptual and operational choice to quantify the
quantumness of correlations. The source of this problem can be identified in the fact that the geometric discord  is based on the Hilbert-Schmidt norm that
is noncontractive under trace preserving quantum channels \cite{Piani2012}.
In order to fix this problem Piani has proposed to redefine the geometric discord as  ${\tilde D}_G(\rho)=\sup_{\Lambda_B}D_G(\Lambda_B(\rho))$, with the supremum over all the local channels $\Lambda_B$ on the part B. In view of this, the geometric discord may be interpreted as a lower bound to ${\tilde D}_G(\rho)$.

 On the other hand, Paula $\etal$ have considered the general Schatten $p$-norms and have shown that the 1-norm is the only $p$-norm able to define a consistent quantum correlation measure \cite{Paula2013}. Furthermore, by restricting the optimization to the tetrahedral of two-qubit Bell-diagonal states, they have also obtained an analytical expression for the 1-norm geometric discord of a general two-qubit Bell-diagonal state. Further results on the analytical calculations of the 1-norm geometric discord is presented in \cite{Ciccarello2014}, where the authors have obtained the analytical expressions for the 1-norm geometric discord for a class of two-qubit states including  quantum-classical states and X states.
 Based on the relative entropy \cite{ModiPRL2010}, Hilbert-Schmidt norm \cite{BellomoPRA2012}, and  trace distance \cite{AaronsonNJP2013}, a  unified view of quantum, classical, and total correlations in the bipartite quantum systems is given.
In a different approach, Tufarelli $\etal$ have defined a rescaled version of the geometric discord \cite{Tufarelli2013}, and have shown that the rescaled discord is obtained by  renormalizing the original geometric discord by the purity of state.  However they have pointed out that although the new measure prevents quantum correlation measure from being biased by the global purity of the state \cite{Piani2012}, it still inherits from the original geometric discord  the noncontractive behavior under quantum operations on the unmeasured subsystem, so that it can be regarded as an indicator rather than as a well-behaved measure of quantum correlation.  In \cite{ChangLuo2013}, Chang and Luo  have shown that the problem with the geometric discord can be remedied simply by starting from the square root of a density operator, rather than the density operator itself, in defining the discord. They have derived the analytical formulas for any pure state and any $2\otimes n$ state. Spehner and Orszag \cite{SpehnerNJP2013,SpehnerJPA2014} have used the Bures distance and introduced a distance-based quantum discord. They have shown that for pure states it is identical to the geometric measure of entanglement and for mixed states it coincides with the optimal success probability of an unambiguous quantum state discrimination task \cite{SpehnerNJP2013}.   They have also derived an explicit formula  for the Bell-diagonal states \cite{SpehnerJPA2014}.

Based on the rank of the correlation matrix, Dakic \etal \cite{Dakic2010} obtained a simple necessary condition for a general bipartite state to be zero-discord. A necessary and sufficient condition for a two-qubit state to be zero-discord is obtained by Lu \etal \cite{Lu2011}. Their condition is related to the existence of a unit vector $\hat{n}\in{\mathbb R}^3$ satisfying the following conditions
\begin{equation}\label{ZDS-n-xT}
\hat{n}\hat{n}^{\T}\vec{x}=\vec{x},\qquad \hat{n}\hat{n}^{\T}T=T,
\end{equation}
where $\vec{x}$ denotes coherence vector of the subsystem $A$, and $T$ is the correlation matrix of $\rho$ in Bloch representation. Accordingly, a two-qubit state is of zero-discord if and only if  either $T=0$, or  $\rank(T)=1$ and $\vec{x}$ belongs to the range of $T$. Recently  Zhou \etal \cite{Zhou2011}, based on the extended version of Eq. (\ref{ZDS-n-xT}) (see Eq. (\ref{ZDS-Pi-xT}) below), introduced a criterion tensor as
\begin{equation}\label{C-T}
\Lambda=\left(\frac{4}{mn}\right)^2\left(TT^{\T}-y^2\vec{x}\vec{x}^{\T}\right),
\end{equation}
and showed that a necessary and sufficient condition for a bipartite state to be zero-discord is $\rank(\Lambda)\le m-1$.
Based on this criterion tensor, the authors of \cite{Zhou2011}  proposed a measure of the quantum correlation
as
\begin{equation}\label{Q-Rho}
Q_{\Lambda}(\rho)=\frac{1}{4}\sum_{k=m}^{m^2-1}|\Lambda^{\downarrow}_{k}|,
\end{equation}
where $\{\Lambda^{\downarrow}_{k}\}_{k=1}^{m^2-1}$ are eigenvalues of the criterion tensor (\ref{C-T}) in nonincreasing order.
They have also shown  that in some particular cases their measure coincides with the geometric measure of quantum discord.

In this paper we use the notion of the $A$-correlation matrix and propose a geometric way of quantifying quantum correlation. The optimization involved in the definition can be carried out analytically for the general Schatten $p$-norms and an arbitrary function of the local coherence vector of the unmeasured subsystem, leading therefore to a class of closed form for the quantumness of correlation.  Remarkably, this class of quantifier includes the tight lower bound on the geometric discord given in (\ref{TightLowerBound}). This suggest that such lower bound fully captures the quantum correlation and may be used as an indicator of the quantum correlation. On the other hand, we show that this class of computable quantifier includes  a measure of the quantum correlation  invariant under local  quantum channels performing on the unmeasured part.  In view of this we show that a way to circumvent the issue arisen by Piani is to rescale the original geometric discord just by dividing it by the purity of the unmeasured part.

The paper is organized as follows.
In section II, we review some properties of coherence vectors of an arbitrary  set of von Neumann projection operators on $\mathcal{H}^A$. The necessary and sufficient condition for a state to be zero-discord is also given in section II. Section III is devoted to the definition of the new measure of quantumness. In this section we also present some properties of the new measure and provide a comparison of this measure with the geometric measure and the measure given in Ref. \cite{Zhou2011}. The paper is concluded in section IV.

\section{Characterizing classical-quantum  states}
A general density operator on ${\mathcal H}^A$ can be written as
\begin{eqnarray}\label{CVR}
\rho^A=\frac{1}{m}\left(\Id+\vec{x}\cdot\hat{\lambda}^A\right),
\end{eqnarray}
where $(m^2-1)$-dimensional vector $\vec{x}=(x_1,\cdots, x_{m^2-1})^{\T}$, with $x_i=\frac{m}{2}\Tr{(\hat{\lambda}_i^A\rho^A)}$, is the so-called coherence vector of $\rho^A$.
For further use, we give bellow some properties of coherence vectors of a set of orthonormal pure states. Let $\{\ket{k}\}_{k=1}^{m}$ be an arbitrary orthonormal base for ${\mathcal H}^A$ and $\{\Pi_k^A=\ket{k}\bra{k}\}_{k=1}^{m}$ denotes projectors on this base; then
\begin{eqnarray}\label{Ortho-Project}
\Pi_k^A\Pi_{k^\prime}^A=\Pi_k^A\delta_{kk^\prime},\qquad \sum_{k=1}^{m}\Pi_k^A=\Id.
\end{eqnarray}
Now let $\vec{\alpha}_k\in \mathbb{R}^{m^2-1}$ denotes  coherence vector corresponding to $\Pi_k^A$, i.e.
\begin{eqnarray}\label{CVR-Pure1}
\Pi_k^A=\frac{1}{m}\left(\Id+\vec{\alpha}_k\cdot\hat{\lambda}^A\right),
\end{eqnarray}
then the orthonormality and completeness conditions given in Eq. (\ref{Ortho-Project}) require that $\{\vec{\alpha}_k\}_{k=1}^m$  fulfill the following two conditions
\begin{eqnarray}\label{CVR-Pure2}
\vec{\alpha}_k\cdot\vec{\alpha}_{k^\prime}=-\frac{m}{2}+\frac{m^2}{2}\delta_{kk^\prime},\qquad \sum_{k=1}^{m}\vec{\alpha}_k=\vec{0}.
\end{eqnarray}
From the first relation above we find
\begin{eqnarray}\label{CVR-Pure3}
|\vec{\alpha}_k|=\sqrt{\frac{m(m-1)}{2}}, \qquad \cos{\theta_{k k^\prime}}={\frac{-1}{m-1}},
\end{eqnarray}
where $\theta_{k k^\prime}$ ($k\ne k^\prime$) is the angle between a pair of coherence vectors $\vec{\alpha}_k$ and $\vec{\alpha}_{k^\prime}$.
This implies that the set of $(m^2-1)$-dimensional coherence vectors $\{\vec{\alpha}_{k}\}_{k=1}^{m}$ corresponding to an orthonormal base
forms an $(m-1)$-dimensional simplex. In what follows, we denote this kind of simplex by  $\Delta^{m-1}_{\{\vec{\alpha}_k\}\in \mathbb{R}^{m^2-1}}$.
Corresponding to any such defined simplex, the following lemma gives an $(m-1)$-dimensional projection operator on $\mathbb{R}^{m^2-1}$ \cite{Zhou2011}.
\begin{lemma}\label{LemmaProjection}
Any $(m-1)$-dimensional projection operator on $(m^2-1)$-dimensional space $\mathbb{R}^{m^2-1}$ can be represented by
\begin{eqnarray}\label{Projection}
{P}=\frac{2}{m^2}\sum_{k=1}^m \vec{\alpha}_{k} ({\vec{\alpha}_{k}})^{\T},
\end{eqnarray}
where $\{\vec{\alpha}_{k}\}_{k=1}^{m}$ are coherence vectors corresponding to orthonormal projections, satisfying  Eqs. (\ref{CVR-Pure2}) and (\ref{CVR-Pure3}).
\end{lemma}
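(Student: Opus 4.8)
The plan is to establish both directions of this parametrization: (i) every family $\{\vec\alpha_k\}_{k=1}^{m}$ obeying (\ref{CVR-Pure2}) yields, via (\ref{Projection}), a bona fide rank-$(m-1)$ orthogonal projection on $\mathbb{R}^{m^2-1}$; and (ii) conversely, every $(m-1)$-dimensional projection on $\mathbb{R}^{m^2-1}$ can be written in that form. Direction (i) makes the formula well posed, while (ii) is the actual surjectivity claim stated in the lemma.

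For (i) I would simply verify the three defining properties. The operator ${P}$ in (\ref{Projection}) is manifestly real symmetric. For idempotency, expand ${P}^{2}=\frac{4}{m^{4}}\sum_{k,k'}(\vec\alpha_k\cdot\vec\alpha_{k'})\,\vec\alpha_k\vec\alpha_{k'}^{\T}$, substitute the pairwise products $\vec\alpha_k\cdot\vec\alpha_{k'}=-\frac{m}{2}+\frac{m^{2}}{2}\delta_{kk'}$ from (\ref{CVR-Pure2}), and observe that the contribution of the $-\frac{m}{2}$ piece is proportional to $\bigl(\sum_k\vec\alpha_k\bigr)\bigl(\sum_{k'}\vec\alpha_{k'}\bigr)^{\T}=0$, again by (\ref{CVR-Pure2}); what survives is exactly ${P}$. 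Finally $\Tr {P}=\frac{2}{m^{2}}\sum_k|\vec\alpha_k|^{2}=\frac{2}{m^{2}}\cdot m\cdot\frac{m(m-1)}{2}=m-1$ using (\ref{CVR-Pure3}). A real symmetric idempotent of trace $m-1$ is an orthogonal projection onto an $(m-1)$-dimensional subspace, as required.

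For (ii) I would start from an arbitrary $(m-1)$-dimensional projection ${P}$, write it as ${P}=\sum_{j=1}^{m-1}e_je_j^{\T}$ with $\{e_j\}$ an orthonormal basis of its range $V\subset\mathbb{R}^{m^2-1}$, and consider the $m\times m$ Gram matrix ${M}=\frac{m^{2}}{2}\Id_m-\frac{m}{2}J_m$ dictated by (\ref{CVR-Pure2}), with $J_m$ the all-ones matrix. Since $J_m$ has eigenvalues $m$ (once) and $0$, ${M}$ is positive semidefinite of rank $m-1$ (eigenvalue $\frac{m^{2}}{2}$ with multiplicity $m-1$, and a single zero along $(1,\dots,1)^{\T}$), hence it factors as ${M}=AA^{\T}$ with ${A}=(a_{kj})$ an $m\times(m-1)$ matrix whose columns are mutually orthogonal of squared norm $\frac{m^{2}}{2}$, i.e. $A^{\T}A=\frac{m^{2}}{2}\Id_{m-1}$. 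Setting $\vec\alpha_k:=\sum_{j=1}^{m-1}a_{kj}\,e_j\in V$, one gets $\vec\alpha_k\cdot\vec\alpha_{k'}=(AA^{\T})_{kk'}=-\frac{m}{2}+\frac{m^{2}}{2}\delta_{kk'}$ and $\sum_k\vec\alpha_k=\sum_j\bigl(\sum_k a_{kj}\bigr)e_j=0$ (each column of $A$ is orthogonal to $(1,\dots,1)^{\T}$), so $\{\vec\alpha_k\}$ satisfies (\ref{CVR-Pure2})–(\ref{CVR-Pure3}); moreover $\frac{2}{m^{2}}\sum_k\vec\alpha_k\vec\alpha_k^{\T}=\frac{2}{m^{2}}\sum_{j,j'}(A^{\T}A)_{jj'}e_je_{j'}^{\T}=\sum_{j=1}^{m-1}e_je_j^{\T}={P}$, which is (\ref{Projection}).

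The one step carrying real content is direction (ii), and it reduces to a single observation: the Gram matrix imposed by (\ref{CVR-Pure2}) is positive semidefinite of rank exactly $m-1$, so a regular-simplex configuration with those inner products embeds isometrically into \emph{any} prescribed $(m-1)$-dimensional subspace — equivalently, the $m$ vectors $\{\vec\alpha_k\}$ form a tight frame for $V$ with frame constant $m^{2}/2$. As a consistency check, for $m=2$ the conditions force $|\vec\alpha_1|=|\vec\alpha_2|=1$ and $\vec\alpha_2=-\vec\alpha_1$, so there $\{\vec\alpha_k\}$ is literally the pair of Bloch vectors of an orthonormal qubit basis, and (\ref{Projection}) recovers the familiar $\hat n\hat n^{\T}$ of (\ref{ZDS-n-xT}).
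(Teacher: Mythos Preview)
Your direction (i) is exactly what the paper does: it verifies ${P}^{\dagger}={P}$ and ${P}^{2}={P}$ using (\ref{CVR-Pure2}), and then infers rank $m-1$ from the simplex geometry (equivalently from $\Tr{P}=m-1$, as you compute). So on the part the paper actually proves, you and the paper agree.

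The paper's proof does \emph{not} address direction (ii) at all; it stops after showing the formula defines an $(m-1)$-dimensional projection. Your argument for (ii) --- factoring the prescribed Gram matrix ${M}=\frac{m^{2}}{2}\Id_m-\frac{m}{2}J_m$ and embedding the resulting simplex into the range of the target projection --- is clean and correct as a proof that any rank-$(m-1)$ projection equals $\frac{2}{m^{2}}\sum_k\vec\alpha_k\vec\alpha_k^{\T}$ for \emph{some} vectors obeying (\ref{CVR-Pure2})--(\ref{CVR-Pure3}).

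There is, however, a genuine gap between what you establish in (ii) and what the lemma literally asks: the $\vec\alpha_k$ are required to be ``coherence vectors corresponding to orthonormal projections'' on $\mathcal{H}^{A}$, not merely vectors satisfying the algebraic constraints (\ref{CVR-Pure2}). For $m=2$ these coincide (any unit vector in $\mathbb{R}^{3}$ is a Bloch vector), which is why your consistency check goes through. For $m\ge 3$ they do not: pure-state coherence vectors occupy a $2(m-1)$-dimensional submanifold of the radius-$\sqrt{m(m-1)/2}$ sphere in $\mathbb{R}^{m^{2}-1}$, so generic simplices obeying (\ref{CVR-Pure2}) do not come from any orthonormal basis. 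A dimension count makes this sharp: the Grassmannian of $(m-1)$-planes in $\mathbb{R}^{m^{2}-1}$ has dimension $m(m-1)^{2}$, while the family of projections arising from genuine bases has dimension at most $m^{2}-m$; these differ for $m\ge 3$. The paper itself effectively concedes this point in Appendix~\ref{AppendixTight}, where the failure of $SU(m)\to SO(m^{2}-1)$ to be surjective for $m>2$ is precisely what produces the strict inequality in (\ref{etai}). In short: your (ii) proves the right statement for the purposes the paper later needs (minimizing over \emph{all} $(m-1)$-dimensional projections in (\ref{NewGD})), but it does not --- and cannot --- establish the literal surjectivity onto coherence-vector simplices claimed in the lemma's wording.
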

\begin{proof} First note that one can easily show that ${P}^\dag={P}$ and ${P}^2={P}$, so  ${P}$ is a projection operator.
Since coherence vectors corresponding to orthonormal base make simplex $\Delta^{m-1}_{\{\vec{\alpha}_{k}\}\in \mathbb{R}^{m^2-1}}$,
so  ${P}$ is an $(m-1)$-dimensional projection operator on $\mathbb{R}^{m^2-1}$ or equivalently it is the unit operator on
space $\mathbb{R}^{m-1}$.
\end{proof}

Let us turn our attention on the bipartite state $\rho$ on $\mathcal{H}^{A}\otimes \mathcal{H}^{B}$ and consider the set of zero-discord states.
By definition, a bipartite state $\rho$ is of  zero-discord, i.e. classical-quantum state,  if and only if there exists orthonormal base $\{\ket{k}\}_{k=1}^{m}$ of $\mathcal{H}^A$ such that \cite{Zurek2001}
\begin{eqnarray}\label{ZDS}
\rho=\sum_{k=1}^{m}p_k\Pi_k^A\otimes \rho_k^B,
\end{eqnarray}
where $\Pi_k^A=\ket{k}\bra{k}$ and $\rho_k^B$ is a state on $\mathcal{H}^{B}$.  The following theorem gives a criterion for a state to be zero-discord \cite{Zhou2011}.
\begin{theorem}\label{TheoremZDS-NC}
A bipartite state $\rho$ on the $\mathcal{H}^A\otimes \mathcal{H}^B$ is a zero-discord state, a classical-quantum state,  if and only if there exists an $(m-1)$-dimensional projection operator ${P}$ on the $(m^2-1)$-dimensional space $\mathbb{R}^{m^2-1}$ such that
\begin{eqnarray}\label{ZDS-Pi-xT}
 {P} \vec{x}=\vec{x}, \qquad {P} T=T,
\end{eqnarray}
where $\vec{x}$ denotes coherence vector of party $A$, and $T$ is the correlation matrix of $\rho$.
\end{theorem}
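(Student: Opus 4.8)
The plan is to use Lemma~\ref{LemmaProjection} as a dictionary between $(m-1)$-dimensional projectors $P$ on $\mathbb{R}^{m^2-1}$ and orthonormal measurement bases $\{\ket{k}\}_{k=1}^{m}$ of $\mathcal{H}^A$, and to translate each condition in (\ref{ZDS-Pi-xT}) into a statement about the Bloch data $(\vec{x},T)$ of $\rho$.

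For the ``only if'' direction I would start from the classical--quantum form (\ref{ZDS}), $\rho=\sum_k p_k\,\Pi_k^A\otimes\rho_k^B$, and compute the Bloch data directly. Writing each $\rho_k^B$ as in (\ref{CVR}) with Bloch vector $\vec{b}_k$, and using $\Tr(\hat{\lambda}_i^A\Pi_k^A)=\tfrac{2}{m}(\vec{\alpha}_k)_i$ and $\Tr(\hat{\lambda}_j^B\rho_k^B)=\tfrac{2}{n}(\vec{b}_k)_j$, one obtains $\vec{x}=\sum_k p_k\vec{\alpha}_k$ and $T=\sum_k p_k\,\vec{\alpha}_k\vec{b}_k^{\T}$. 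Hence $\vec{x}$ and every column of $T$ lie in $\spa\{\vec{\alpha}_k\}_{k=1}^{m}$. From the Gram relations (\ref{CVR-Pure2}) this span is exactly $(m-1)$-dimensional, so the orthogonal projector $P$ onto it --- which by Lemma~\ref{LemmaProjection} equals $\tfrac{2}{m^2}\sum_k\vec{\alpha}_k\vec{\alpha}_k^{\T}$ --- satisfies $P\vec{x}=\vec{x}$ and $PT=T$.

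For the ``if'' direction, given an $(m-1)$-dimensional projector $P$ obeying (\ref{ZDS-Pi-xT}), I would use Lemma~\ref{LemmaProjection} to write $P=\tfrac{2}{m^2}\sum_k\vec{\alpha}_k\vec{\alpha}_k^{\T}$ for the coherence vectors of some orthonormal basis $\{\ket{k}\}$, with projectors $\Pi_k^A=\ket{k}\bra{k}$ and measurement channel $\Pi^A(\cdot)=\sum_k(\Pi_k^A\otimes\Id)(\cdot)(\Pi_k^A\otimes\Id)$. Expanding $\rho$ as in (\ref{Rho-Bipart2}) and using $\Pi^A(\Id)=\Id$ and $\Pi^A(\hat{\lambda}_i^A)=\tfrac{2}{m}\sum_k(\vec{\alpha}_k)_i\,\Pi_k^A$ together with $\sum_k\vec{\alpha}_k=\vec{0}$ and $\tfrac{2}{m^2}\sum_k\vec{\alpha}_k\vec{\alpha}_k^{\T}=P$, I would check term by term that the identity term and the $\vec{y}$ term are reproduced trivially, the $\vec{x}$ term is reproduced precisely because $P\vec{x}=\vec{x}$, and the correlation term precisely because $PT=T$; thus $\Pi^A(\rho)=\rho$. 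Writing $\Pi^A(\rho)=\sum_k\Pi_k^A\otimes\bigl[(\bra{k}\otimes\Id)\rho(\ket{k}\otimes\Id)\bigr]$ and normalising the non-zero (positive) $B$-blocks then displays $\rho$ in the form (\ref{ZDS}), so $\rho$ is zero-discord.

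The bulk of the work is the term-by-term verification that $\Pi^A(\rho)=\rho$ in the ``if'' direction, which is routine once the two identities $\sum_k\vec{\alpha}_k=\vec{0}$ and $\tfrac{2}{m^2}\sum_k\vec{\alpha}_k\vec{\alpha}_k^{\T}=P$ are in hand. The only genuinely delicate point is the dimension count in the ``only if'' direction: one must confirm from (\ref{CVR-Pure2}) that the Gram matrix $[\vec{\alpha}_k\cdot\vec{\alpha}_{k'}]$ has rank exactly $m-1$, so that $\spa\{\vec{\alpha}_k\}$ is $(m-1)$-dimensional and Lemma~\ref{LemmaProjection} applies to its orthogonal projector. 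Everything else rests on Lemma~\ref{LemmaProjection}, which supplies the correspondence between projectors and measurement bases used in both implications.
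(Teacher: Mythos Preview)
Your ``only if'' direction coincides with the paper's: both expand the classical--quantum state (\ref{ZDS}) in Bloch form to obtain $\vec{x}=\sum_k p_k\vec{\alpha}_k$ and $T=\sum_k p_k\vec{\alpha}_k\vec{\xi}_k^{\T}$ (this is precisely the paper's Lemma~\ref{Lemma-ZDS-CVR-xyT}(i)), and then observe that the projector $P$ of Lemma~\ref{LemmaProjection} built from the $\vec{\alpha}_k$ fixes both.

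Your ``if'' direction, however, takes a genuinely different route. The paper works purely at the level of Bloch data via Lemma~\ref{Lemma-ZDS-CVR-xyT}(ii): it first writes $\vec{x}=\sum_k p_k\vec{\alpha}_k$ and $\vec{y}=\sum_k p_k\vec{\xi}_k$ for a \emph{general} $\rho$ (using the spectral decomposition of $\rho^A$ and some realisation of $\rho^B$), then argues that the hypothesis $PT=T$ forces $T=\sum_{k,l}p_{kl}\vec{\alpha}_k\vec{\eta}_l^{\T}$, and finally exploits the non-uniqueness of the $\vec{\xi}_k$ to redefine them as $p_k\vec{\xi}_k=\sum_l p_{kl}\vec{\eta}_l$, thereby recovering the zero-discord form (\ref{ZDS-CVR-T}) for $T$. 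You instead pass back to operators: from $P$ you extract, via Lemma~\ref{LemmaProjection}, an orthonormal basis $\{\ket{k}\}$ of $\mathcal{H}^A$, and verify term by term in the expansion (\ref{Rho-Bipart2}) that the associated measurement channel satisfies $\Pi^A(\rho)=\rho$, which immediately exhibits $\rho$ in the form (\ref{ZDS}). Your argument is more operational and makes the role of each hypothesis transparent --- $P\vec{x}=\vec{x}$ is used exactly for the $\vec{x}\cdot\hat{\lambda}^A\otimes\Id$ term and $PT=T$ exactly for the correlation term --- whereas the paper's argument stays algebraic throughout and delivers the explicit ensemble $(p_k,\Pi_k^A,\rho_k^B)$ as a by-product. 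Both proofs rest equally on the surjectivity assertion in Lemma~\ref{LemmaProjection} (that every $(m-1)$-dimensional projector on $\mathbb{R}^{m^2-1}$ arises from some orthonormal basis of $\mathcal{H}^A$), so neither is more self-contained than the other on that point.
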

A proof of this theorem is given in Appendix \ref{AppendixProof} (see also \cite{Zhou2011}). Let us mention here that conditions (\ref{ZDS-Pi-xT}) can be written also as
\begin{eqnarray}\label{ZDS-Pi-Tau}
 {P} {\mathcal T}={\mathcal T},
\end{eqnarray}
where ${\mathcal T}$ is an $(m^2-1)\times n^2$ matrix, obtained by removing the first row of the $m^2\times n^2$ matrix $C$ of Eq. (\ref{C}), i.e.
\begin{equation}\label{LeftCorrelation}
{\mathcal T}:=\sqrt{\frac{2}{m^2 n}}\left(\begin{array}{cc}
\vec{x}\; & \;\sqrt{\frac{2}{n}}T
\end{array}\right).
\end{equation}
Since $\mathcal{T}$ includes coherence vector $\vec{x}$ of the subsystem  $A$ as well as the correlation matrix $T$ of the bipartite system $A-B$,  we call $\mathcal{T}$ as the $A$-correlation matrix associated to the state $\rho$.

As an example, let us consider the case of two-qubit system. In this case a general zero-discord state $\chi$ is characterized  by
$\vec{x}=(p_1-p_2)\hat{n}$, $\vec{y}=(p_1\vec{\xi}_1+p_2\vec{\xi}_2)$, and $T=\hat{n}(p_1\vec{\xi}_1-p_2\vec{\xi}_2)^{\T}$, where $p_1,p_2$ are probabilities with $p_1+p_2=1$, $\hat{n}$ is a unit vector, and $\vec{\xi}_1,\vec{\xi}_2$ are coherence vectors of the subsystem $B$. Evidently, the zero-discord condition (\ref{ZDS-n-xT}) is satisfied.
In the following we show that the above theorem provides a necessary and sufficient  condition for a bipartite state $\rho$ to be zero-discord \cite{Dakic2010,Zhou2011}.
\begin{corollary} A  bipartite state $\rho$ with the $A$-correlation matrix ${\mathcal T}$, associated to the local coherence vector  $\vec{x}$ and  correlation matrix $T$,  is a classical-quantum state, i.e. zero-discord  state,  if and only if
\begin{eqnarray}\label{RangeConditions1}\nonumber
\rank(\mathcal{T}\mathcal{T}^{\T})\le m-1.
\end{eqnarray}
Equivalently, one can say that $\rho$ is a zero-discord state if and only if one of the following conditions is satisfied
\begin{eqnarray}\label{RangeConditions2}\nonumber
(i)&\quad \rank(TT^{\T})\le m-2,\qquad\qquad\qquad\quad\quad\;\;\\ \nonumber
(ii)&\quad \rank(TT^{\T})\le m-1,\quad  \textrm{and} \quad \vec{x}\in \R({TT^{\T}}),
\end{eqnarray}
where $\R(M)$  denotes range of  the matrix $M$.
\end{corollary}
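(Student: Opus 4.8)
The plan is to read the corollary off Theorem~\ref{TheoremZDS-NC} and Lemma~\ref{LemmaProjection}, using only two elementary facts about real matrices: for any real $M$ one has $\rank(MM^{\T})=\rank(M)$ and $\R(MM^{\T})=\R(M)$. Applying these to $M=\mathcal{T}$ gives $\rank(\mathcal{T}\mathcal{T}^{\T})=\rank(\mathcal{T})$ and $\R(\mathcal{T}\mathcal{T}^{\T})=\R(\mathcal{T})$, so the rank condition in the statement is nothing but the assertion that the column space of $\mathcal{T}$ has dimension at most $m-1$. Recall also from Eq.~(\ref{LeftCorrelation}) that the columns of $\mathcal{T}$ are (nonzero scalar multiples of) $\vec{x}$ and the columns of $T$, hence $\R(\mathcal{T})=\spa\{\vec{x}\}+\R(T)=\spa\{\vec{x}\}+\R(TT^{\T})$.

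For the first equivalence, $\rho$ zero-discord $\Leftrightarrow \rank(\mathcal{T}\mathcal{T}^{\T})\le m-1$, I would argue both directions. Forward: Theorem~\ref{TheoremZDS-NC} supplies an $(m-1)$-dimensional projection ${P}$ with ${P}\mathcal{T}=\mathcal{T}$, so every column of $\mathcal{T}$ lies in $\R({P})$ and therefore $\rank(\mathcal{T})\le\dim\R({P})=m-1$. Converse: if $\rank(\mathcal{T})\le m-1$, then (since $m-1\le m^2-1$) one can enlarge $\R(\mathcal{T})$ to an $(m-1)$-dimensional subspace $V\subseteq\mathbb{R}^{m^2-1}$, let ${P}$ be the orthogonal projection onto $V$, and invoke Lemma~\ref{LemmaProjection} to conclude that ${P}$ admits the admissible simplex form, i.e. corresponds to an orthonormal von Neumann measurement on $\mathcal{H}^{A}$. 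Because $\vec{x}$ and the columns of $T$ all lie in $V$, we get ${P}\vec{x}=\vec{x}$ and ${P}T=T$, and Theorem~\ref{TheoremZDS-NC} then yields that $\rho$ is a classical-quantum state.

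For the ``equivalently'' part I would do a one-line case split on whether $\vec{x}\in\R(TT^{\T})$, using $\R(\mathcal{T}\mathcal{T}^{\T})=\spa\{\vec{x}\}+\R(TT^{\T})$. If $\vec{x}\in\R(TT^{\T})$ the sum collapses, so $\rank(\mathcal{T}\mathcal{T}^{\T})=\rank(TT^{\T})$, which is $\le m-1$ precisely in case $(ii)$; if $\vec{x}\notin\R(TT^{\T})$ the dimension goes up by one, $\rank(\mathcal{T}\mathcal{T}^{\T})=\rank(TT^{\T})+1$, which is $\le m-1$ precisely in case $(i)$. Hence $\rank(\mathcal{T}\mathcal{T}^{\T})\le m-1$ holds if and only if $(i)$ or $(ii)$ holds, and one may freely trade $TT^{\T}$ for $T$ throughout via $\R(TT^{\T})=\R(T)$ and $\rank(TT^{\T})=\rank(T)$.

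The only real obstacle is mild and occurs in the converse of the first equivalence: $\R(\mathcal{T})$ need not have dimension exactly $m-1$, so it must first be padded to an $(m-1)$-dimensional subspace before projecting, and one must be sure the resulting orthogonal projection is of the form required by Theorem~\ref{TheoremZDS-NC} (i.e. built from coherence vectors of an orthonormal basis). This legitimacy is exactly the content of Lemma~\ref{LemmaProjection}, so no extra argument is needed; everything else is bookkeeping with ranks and ranges of real matrices.
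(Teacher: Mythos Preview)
The paper states this corollary without proof, treating it as an immediate consequence of Theorem~\ref{TheoremZDS-NC}; your argument is exactly the natural unpacking of that implication and is correct. One small remark: in the converse of the first equivalence you do not actually need Lemma~\ref{LemmaProjection}. Theorem~\ref{TheoremZDS-NC} as stated only asks for the existence of \emph{some} $(m-1)$-dimensional projection $P$ on $\mathbb{R}^{m^2-1}$ with $P\vec{x}=\vec{x}$ and $PT=T$, so once you have padded $\R(\mathcal{T})$ to an $(m-1)$-dimensional subspace $V$ and taken $P$ to be the orthogonal projection onto $V$, you can invoke Theorem~\ref{TheoremZDS-NC} directly; whether this particular $P$ comes from coherence vectors of an orthonormal basis is irrelevant at the level of the corollary. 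Everything else (the rank/range identities for $MM^{\T}$, the column-space description $\R(\mathcal{T})=\spa\{\vec{x}\}+\R(TT^{\T})$, and the case split on $\vec{x}\in\R(TT^{\T})$) is exactly right.
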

\begin{remark}
Note that the definition (\ref{LeftCorrelation}) for the $A$-correlation matrix is not the most general one. In fact since the  coherence vector $\vec{y}$ of the second subsystem is invariant under any measurement on the first subsystem, the zero-discord condition (\ref{ZDS-Pi-Tau}) is still satisfied if  we  extend the $A$-correlation matrix $\mathcal{T}$
in a more general form as
\begin{equation}\label{LeftCorrelation-ff}
{\mathcal T}_{f}:=\sqrt{\frac{2}{m^2 n}}\left(\begin{array}{cc}
f_1(y)\vec{x}\; & \;\sqrt{\frac{2}{n}}f_2(y)T
\end{array}\right),
\end{equation}
where $f=\{f_1,f_2\}$ with $f_1(y)$ and $f_2(y)$ as  two, in general complex, functions of $y$. However, the definition given by Eq. (\ref{LeftCorrelation})  is unique in the sense that it is constructed from the expansion coefficients  of the density matrix $\rho$ in terms of the orthonormal basis $X_i\otimes Y_j$  of Eq. (\ref{Rho-Bipart1}), i.e. $\mathcal{T}_{ij}=c_{ij}$ for $i=1,\cdots,m^2-1$ and $j=0,\cdots,n^2-1$.
\end{remark}

\section{Quantifying Quantum Correlations}
\subsection{Computable measure of quantum correlation}
Theorem \ref{TheoremZDS-NC} allows us to introduce a new measure of quantum correlation. Since condition (\ref{ZDS-Pi-Tau}) gives necessary and sufficient  condition for a state to be zero-discord, therefore measuring any violation of this   condition can be used to quantify correlation. Accordingly, we use the degree to which the above condition fails to be satisfied as a measure of quantum correlation. Here we propose the following measure of quantum correlation
\begin{proposition} For a given bipartite state $\rho$  with the local coherence vectors   $\vec{x}$, $\vec{y}$  and  the correlation matrix $T$
we propose the  following quantity as a quantum correlation measure
\begin{eqnarray}\label{NewGD}
D_{{{\mathcal T}},f}^{(p)}(\rho)=\min_{{P}}\|{\mathcal T}_f-{P} {\mathcal T}_f\|_{p},
\end{eqnarray}
where $\|A\|_{p}=\left[\Tr{(A^\dag A)^{p/2}}\right]^{1/p}$ (for $p\ge 1$) is the so-called Schatten $p$-norm \cite{HornBook2013}, and the minimum is taken over all $(m-1)$-dimensional projection operators $P$ on $\mathbb{R}^{m^2-1}$. Also the generalized $A$-correlation matrix ${\mathcal T}_f$ is defined by Eq. (\ref{LeftCorrelation-ff}).
\end{proposition}
As we show below, minimization involved in the definition given above can be solved analytically for any $p\ge 1$ and arbitrary choice of the functions $f_1(y)$ and $f_2(y)$, giving therefore a closed form of expression for the quantum correlation of an arbitrary  $m\otimes n$  bipartite state.
To see this, we write
\begin{eqnarray}\nonumber
D_{{\mathcal T},f}^{(p)}(\rho)&=&\min_{{P}}\|{\mathcal T}_f-{P} {\mathcal T}_f\|_{p}=\min_{{P^{\perp}}}\|P^{\perp} {\mathcal T}_f\|_{p} \\ \label{NewGD2}
&=&\min_{{P^{\perp}}}\left[\Tr{(P^{\perp}{\mathcal T}_f{\mathcal T}_f^\dag P^{\perp})^{p/2}}\right]^{1/p},
\end{eqnarray}
where we have defined $P_{\perp}=\Id-P$ as the $m(m-1)$-dimensional projection operator on $\mathbb{R}^{m^2-1}$.
Invoking the fact that for any Hermitian operator $H$ and any projection operator $P$, the  eigenvalues of the restricted matrix $P H P$ lie between the eigenvalues of the matrix $H$, i.e. $\min\textrm{Eig}\{H \}\le \textrm{Eig}\{PHP\}\le \max\textrm{Eig}\{H\}$, we find  the following expression for the quantum correlation of the bipartite state $\rho$
\begin{eqnarray}\label{s-f-measure}
D_{{\mathcal T},f}^{(p)}(\rho)=\left[\sum_{k=m}^{m^2-1}(\tau_{k}^{f\downarrow})^{p/2}\right]^{1/p},
\end{eqnarray}
where  $\{\tau_{k}^{f\downarrow}\}_{k=1}^{m^2-1}$ are eigenvalues of ${\mathcal T}_f{\mathcal T}_f^{\dagger}=\frac{2}{m^2 n}\left(|f_1(y)|^2 \vec{x}\vec{x}^{\T}+\frac{2}{n}|f_2(y)|^2T T^{\T}\right)$ in nonincreasing order.

Before we give properties of the the above measure,
it is worth to mention that Eq. (\ref{s-f-measure}) gives us a closed relation for an arbitrary Schatten $p$-norm, i.e.  any $p\ge 1$. Some  important candidates for $p$ may be: (i) Trace class norm ($p=1$), $D_{{\mathcal T},f}^{(1)}(\rho)=\sum_{k=m}^{m^2-1}\sqrt{\tau_{k}^{f\downarrow}}$. (ii) Hilbert-Schmidt norm ($p=2$), $D_{{\mathcal T},f}^{(2)}(\rho)=\sqrt{\sum_{k=m}^{m^2-1}\tau_{k}^{f\downarrow}}$. (iii) Operator norm ($p\rightarrow \infty$), which for the linear transformation $A: \mathcal{H}\rightarrow \mathcal{H}^\prime$ is defined by $\|A\|_{op}=\max\{|A\psi|\; : \; \psi\in \mathcal{H}, \; |\psi|=1\}$ which is equal to $\|A\|_{\infty}=\lim_{p\rightarrow \infty}\|A\|_{p}$. In this case we have $D_{{\mathcal T},f}^{(\infty)}(\rho)=\sqrt{\tau_{m}^{f\downarrow}}$.
We are now in the position to present some properties of the above measure of quantum correlation.
\begin{enumerate}
\item
 By definition, the above measure of quantum correlation  vanishes only for zero-discord states.
\item
For any maximally entangled state $\ket{\Psi}=\frac{1}{\sqrt{m}}\sum_{i=1}^{m}\ket{ii}$,  we find ${\mathcal T}_f{\mathcal T}_f^{\T}=\frac{|f_2(0)|^2}{m^2}I_{m^2-1}$ (see example below), so that $D_{{\mathcal T},f}^{(p)}(\rho)=\frac{[m(m-1)]^{1/s}}{m}|f_2(0)|$. This, in particular,  achieves its maximum value if $|f_2(y)|$ be a constant or a decreasing function of $y=\sqrt{\vec{y}^{\T}\vec{y}}$.
\item
 $D_{{\mathcal T},f}^{(p)}(\rho)$ is invariant under any local unitary operations $U_1$ and $U_2$  performed on ${\mathcal H}^A$ and $\mathcal{H}^B$  respectively, i.e. $D_{{\mathcal T},f}^{(p)}((U_1\otimes U_2)\rho(U_1\otimes U_2)^\dagger)=D_{{\mathcal T},f}^{(p)}(\rho)$ where $U_1\in SU(m)$ and $U_2\in SU(n)$. This follows from the fact under such transformations, the coherence vectors ${\vec x}$, ${\vec y}$ and the correlation matrix $T$ transform as
\begin{equation}
{\vec x}\rightarrow O_1{\vec x},\qquad {\vec y}\rightarrow O_2{\vec y},\qquad T\rightarrow O_1TO_2^{\T},
\end{equation}
where $O_1$ corresponds to $U_1$ via $U_1({\vec x}\cdot {\hat \lambda}^A)U_1^\dag= (O_1{\vec x})\cdot {\hat \lambda}^A$ with $O_1\in SO(m^2-1)$, and a similar definition holds for $O_2$. This leads to $({\mathcal T}_f{\mathcal T}_f^{\T})\rightarrow O_1 ({\mathcal T}_f{\mathcal T}_f^{\T}) O_1^{\T}$, leaving  eigenvalues of ${\mathcal T}_f{\mathcal T}_f^{\T}$ invariant.
\item
$D_{{\mathcal T},f}^{(p)}(\rho)$ is invariant under local reversible operations on the unmeasured subsystem if we choose  $f_1(y)=f_2(y)=1/\sqrt{\mu{(\rho^B)}}$ where $\mu(\rho^B)=\Tr{(\rho^B)^2}$ is the purity of the subsystem $B$. Explicitly, this means that for any map $\Gamma^C : \rho\rightarrow \rho\otimes \rho^C$, i.e. any channel that introduces a noisy ancillary state $\rho^C$ on the unmeasured subsystem, we have that $D_{{\mathcal T},f}^{(p)}(\Gamma^C(\rho))=D_{{\mathcal T},f}^{(p)}(\rho)$.
 To show this let ${\mathcal T}_f^{AB}$  and ${\mathcal T}_f^{A(BC)}$ be the $A$-correlation matrices  associated to the input and output states $\rho^{AB}$ and $\Gamma^C(\rho^{AB})=\rho^{AB}\otimes \rho^C$, respectively. Using the coherence vector representation of $\rho^C$ as $\rho^C=\frac{1}{n^\prime}\left(\Id+\vec{z}\cdot\hat{\lambda}^C\right)$,
where $\vec{z}=(z_1,\cdots, z_{{n^\prime}^2-1})^{\T}$, with $z_i=\frac{n^\prime}{2}\Tr{(\hat{\lambda}_i^C\rho^C)}$, and $n^{\prime}$ be the dimension of the ancillary Hilbert space, one can shows after some calculations that ${\mathcal T}_f^{A(BC)}=f(z)\frac{1}{\sqrt{n^\prime}}\left(\begin{array}{cc}1 \; &\; \sqrt{\frac{2}{n^\prime}}\vec{z}^\T\end{array}\right)\otimes{\mathcal T}_f^{AB}$. Using the fact that $\mu(\rho^C)=\frac{1}{n^\prime}\left(1+\frac{2}{n^\prime}\vec{z}^\T\vec{z}\right)$, this immediately leads to $({\mathcal T}_f^{A(BC)})({\mathcal T}_f^{A(BC)})^{\T}=[f(z)]^2\mu(\rho^C)({\mathcal T}_f^{AB})({\mathcal T}_f^{AB)})^{\T}=({\mathcal T}_f^{AB})({\mathcal T}_f^{AB)})^{\T}$. It follows therefore that $D_{{\mathcal T},f}^{(p)}(\Gamma^C(\rho^{AB}))=D_{{\mathcal T},f}^{(p)}(\rho^{AB})$.
\end{enumerate}
Properties 1 to 3 show that for any choice  of $f_1(y)$ and $f_2(y)$, the quantity  $D_{{\mathcal T},f}^{(p)}(\rho^{AB})$ can be regarded as a computable  indicator for the quantum correlations of the bipartite state $\rho$ due to the first subsystem. Property 4, however,  indicates that for the unique choice   $f_1(y)=f_2(y)=1/\sqrt{\mu{(\rho^B)}}$, the corresponding quantity have the required property of being   invariant under local  quantum channels performing on the unmeasured part; as such  it can be regarded as a  computable well-defined  measure of quantum correlation.

\begin{proposition} For a given bipartite state $\rho$ with the $A$-correlation matrix ${\mathcal T}_\mu$, associated to the  local coherence vector   $\vec{x}$ of the party $A$,  local  purity $\mu(\rho^B)$ of the party $B$, and  the correlation matrix $T$, we propose the computable well-defined measure of the  quantum correlation as
\begin{eqnarray}
D_{{\mathcal T},\mu}^{(p)}(\rho)=\min_{{P}}\|{\mathcal T}_\mu-{P} {\mathcal T}_\mu\|_{p}=\frac{1}{\sqrt{\mu{(\rho^B)}}}\left[\sum_{k=m}^{m^2-1}(\tau_{k}^{\downarrow})^{p/2}\right]^{1/p}.
\end{eqnarray}
where  $\{\tau_{k}^{\downarrow}\}_{k=1}^{m^2-1}$ are eigenvalues of ${\mathcal T}{\mathcal T}^{\T}=\frac{2}{m^2 n}\left( \vec{x}\vec{x}^{\T}+\frac{2}{n}T T^{\T}\right)$ in nonincreasing order.
\end{proposition}
In the following subsection we present some measures that can be obtained from the general formula (\ref{s-f-measure}).

\subsection{Relation with the other measures}

{\it Geometric discord.---}
It is worth to note that the square of $D_{{\mathcal T},f}^{(2)}(\rho)$ for $f=1$, i.e. for $f=\{f_1=1,f_2=1\}$, is closely related to the geometric discord. It follows from the fact
\begin{equation}
{{\mathcal T}_{f=1}{\mathcal T}_{f=1}^{\T}}=\frac{2}{m^2n}\left(\vec{x}\vec{x}^{\T}+\frac{2}{n}TT^{\T}\right)=\frac{2}{m^2n}G,
\end{equation}
which immediately indicates that  $[D_{{\mathcal T},f=1}^{(2)}(\rho)]^2$ coincides with the tight lower bound  on the geometric discord (\ref{TightLowerBound}), therefore we have in general
\begin{equation}
\left[D_{{\mathcal T},f=1}^{(2)}(\rho)\right]^2\le D_G(\rho),
\end{equation}
where the equality is satisfied when the first subsystem is a qubit. This, particularly,  implies that  for an $m\otimes n$ system, $\rho$ is a zero-discord state if and only if  the lower bound (\ref{TightLowerBound}) vanishes.
Recall that one refers to a bound as faithful if and only if it vanishes on any state for which the bounded quantity vanishes.
In view of this  the tight lower bound given by Eq. (\ref{TightLowerBound}) is faithful, so it may serve as an independent indicator  of the quantumness.
It should be noted that our  measure of quantumness can be regarded as a kind of geometric measure. Indeed, the geometric discord $D_G(\rho)$, as given in  Eq. (\ref{GD-Luo2}), is defined as the square of the Hilbert-Schmidt distance between a given $\rho$ and the closest state $\Pi^A(\rho)$, for all von Neumann (projective) measurements $\Pi^A=\{\Pi^A_k\}_{k=1}^{m}$ acting on $\mathcal{H}^A$. On the other hand,  $D_{{\mathcal T},f}^{(p)}(\rho)$ is defined as the $p$-distance between the $A$-correlation matrix $\mathcal{T}_f$ associated to  $\rho$ and the closest $A$-correlation matrix $P(\mathcal{T}_f)$, for all $(m-1)$-dimensional projection operators  $P$ acting on $\mathbb{R}^{m^2-1}$.

{\it Criterion tensor $\Lambda$.---}
Surprisingly, the general definition (\ref{LeftCorrelation-ff}) of the $A$-correlation matrix enables one to obtain the  criterion tensor $\Lambda$ as well as the nonclassicality $Q_{\Lambda}(\rho)$ of Ref. \cite{Zhou2011}. To this aim, let us choose $f$ as  $f_{\Lambda}=\{f_1(y)=i\sqrt{\frac{2}{n}}y,f_2(y)=1\}$ in Eq. (\ref{LeftCorrelation-ff}), and get  $\frac{1}{4}\Lambda=\mathcal{T}_{f_\Lambda}\mathcal{T}_{f_{\Lambda}}^\T$. Now the nonclassicality $Q_{\Lambda}(\rho)$ can be obtained as \cite{Zhou2011}
\begin{eqnarray}\label{Q-Rho-oneNorm}
Q_{\Lambda}(\rho)=\frac{1}{4}\min_{{P}}\{\|{\Lambda}\|_1-\|{P} {\Lambda}P\|_1\}=\frac{1}{4}\sum_{k=m}^{m^2-1}|\Lambda^{\downarrow}_{k}|,
\end{eqnarray}
A comparison of Eq.  (\ref{NewGD2}) with Eq. (\ref{Q-Rho-oneNorm}) shows that they are, in general, different except for some special cases. More precisely,    $[D_{{\mathcal T},f=1}^{(2)}(\rho)]^2$ and $Q_{\Lambda}(\rho)$ are obtained from the first $m(m-1)$ smaller eigenvalues of the matrix $\mathcal{T}_f\mathcal{T}_f^\T$, with the pair $\{f_1,f_2\}$ given by $\{f_1=1,f_2=1\}$ and $\{f_1=i\sqrt{\frac{2}{n}}y,f_2=1\}$, respectively.
Evidently when  $\vec{x}=0$, the two measures are completely equivalent. On the other hand,  $Q_{\Lambda}(\rho)$ gives the same result for all states with $y=0$ and different $\vec{x}$, i.e. it becomes independent of the coherence vector $\vec{x}$  whenever $y=0$, but $[D_{{\mathcal T},f=1}^{(2)}(\rho)]^2$ preserve the rule of the coherence vector $\vec{x}$  in this case.

{\it One-norm geometric discord.---}
Interestingly, for $f=1$ the operator norm distance   coincides with the one-norm geometric quantum discord \cite{Nakano2013,Paula2013,Ciccarello2014},  for some special two-qubit cases. This happens, for instance,  for Bell-diagonal states
for which $\vec{x}=0$ and $T=\diag\{t_1,t_2,t_3\}$. In this case we find that  $D_{{\mathcal T},f=1}^{(\infty)}(\rho)=|t_2|/2$, where we have supposed that $|t_1|\ge |t_2|\ge |t_3|$. On the other hand for states with $\vec{x}\neq 0$, $T=\diag\{t,t,t\}$ our definition gives  $D_{{\mathcal T},f=1}^{(\infty)}(\rho)=|t|/2$ \cite{Paula2013,Ciccarello2014}.

We give bellow some illustrative examples.
\subsection{Examples}

{\it $m\otimes m$ Werner states.---}  For the $m\otimes m$ Werner states
\begin{eqnarray}
\rho=\frac{m-x}{m^3-m}\Id_m+\frac{m x-1}{m^3-m}F, \qquad x\in [-1,1],
\end{eqnarray}
with $F=\sum_{k,l=1}^{m}\ket{kl}\bra{lk}$, the geometric measure of discord is \cite{LuoFu2010}
\begin{eqnarray}
D_G(\rho)=\frac{(m x-1)^2}{m(m-1)(m+1)^2}.
\end{eqnarray}
On the other hand for these states  $\vec{x}=\vec{y}=0$ and
\begin{eqnarray}
{\mathcal T}{\mathcal T}^{\T}=\diag\{\tau,\cdots,\tau\},\;\; \textrm{with}\;\; \tau=\frac{(mx-1)^2}{m^2(m^2-1)^2},
\end{eqnarray}
so that we get $[D_{{\mathcal T},f=1}^{(2)}(\rho)]^2=m(m-1)\tau=D_G(\rho)=[D_{{\mathcal T},\mu}^{(2)}(\rho)]^2/m$.

{\it $m \otimes m$ Isotropic States.---} As the second example we consider $m\otimes m$ isotropic states defined by
\begin{eqnarray}
\rho=\frac{1-x}{m^2-1}\Id_m+\frac{m^2 x-1}{m^2-1}\ket{\psi}\bra{\psi}, \quad x\in [0,1],
\end{eqnarray}
with $\ket{\psi}=\frac{1}{\sqrt{m}}\sum_{k=1}^{m}\ket{kk}$.
The geometric measure of discord is \cite{LuoFu2010}
\begin{eqnarray}
D_G(\rho)=\frac{(m^2 x-1)^2}{m(m-1)(m+1)^2}.
\end{eqnarray}
On the other hand for these states  $\vec{x}=\vec{y}=0$ and
\begin{eqnarray}
{\mathcal T}{\mathcal T}^{\T}=\diag\{\tau,\cdots,\tau\},\;\; \textrm{with}\;\;  \tau=\frac{(m^2x-1)^2}{m^2(m^2-1)^2},
\end{eqnarray}
we obtain $[D_{{\mathcal T},f=1}^{(2)}(\rho)]^2=m(m-1)\tau=D_G(\rho)=[D_{{\mathcal T},\mu}^{(2)}(\rho)]^2/m$.

{\it Pure $m\otimes m$ states.---}
Next, we consider  an example of bipartite $m\otimes m$ pure state $\ket{\Psi}$,  with the following Schmidt decomposition
\begin{equation}
\ket{\Psi}=\sum_{i=1}^m\sqrt{s_i}\ket{i}\ket{i}.
\end{equation}
The geometric discord of this state is \cite{LuoFuPRL2011,LuoFu2012}
\begin{equation}\label{GDPuremxm}
D_G(\Psi)=1-\sum_{i=1}^{m}s_i^2=1-\Tr{(\rho^A)^2}=\frac{1}{2}C^2(\Psi),
\end{equation}
where $\rho^A=\Tr_{B}{\project{\Psi}}$ is the reduced state of the subsystem $A$, and  $C(\Psi)$ is the generalized concurrence of $\ket{\Psi}$ \cite{Rungta2001}. On the other hand, in order to evaluate $D_{{\mathcal T},f}^{(p)}(\Psi)$ we have to find the local coherence vectors and the correlation matrix associated with $\rho=\project{\Psi}$, we get
\begin{eqnarray}
x_k&=&y_k=\frac{m}{2}\sum_{i=1}^ms_i\bra{i}\hat{\lambda}_k\ket{i}, \\ \nonumber
t_{kl}&=&\frac{m^2}{4}\sum_{i=1}^m \sum_{j=1}^m\sqrt{s_is_j}\bra{i}\hat{\lambda}_k\ket{i}\bra{i}\hat{\lambda}_l^\ast\ket{i}, \\
&=&\frac{m^2}{4}\Tr{\left(\sqrt{\rho^A}\hat{\lambda}_k \sqrt{\rho^A}\hat{\lambda}_l^\ast\right)},
\end{eqnarray}
for $k,l=1,\cdots, m^2-1$, where  $\{\hat{\lambda}_k\}_{k=1}^{m^2-1}$ are basis of $SU(m)$ algebra. If we choose the basis of $SU(m)$ in such a way that the first $m-1$ generators make the basis of its Cartan subalgebra \cite{GeorgiBook1999}, we get
\begin{eqnarray}
x_k&=&y_k=\left\{\begin{array}{ll}\frac{m\left(\sum_{i=1}^k s_i-ks_{k+1}\right)}{\sqrt{2k(k+1)}}, & \;k=1,\cdots, m-1
\\ 0  & k=m,\cdots,m^2-1\end{array}\right.
 \\
T&=&\frac{m^2}{2}\left(\begin{array}{c|c}T_c & 0  \\ \hline
0 & T_d
\end{array}\right),
\end{eqnarray}
where $T_c$  is an $(m-1)\times (m-1)$ symmetric matrix with
\begin{eqnarray}
(T_c)_{kk}&=&\frac{1}{k(k+1)}\left(\sum_{i=1}^k s_i+k^2 s_{k+1}\right), \\
(T_c)_{k<l}&=&\frac{1}{\sqrt{k(k+1)l(l+1)}}\left(\sum_{i=1}^k s_i-k s_{k+1}\right),
\end{eqnarray}
and $T_d$ is an $m(m-1)\times m(m-1)$ diagonal matrix
such that $T_d=\diag\{\pm\sqrt{s_{1}s_{2}},\pm\sqrt{s_{1}s_{3}},\cdots,\pm\sqrt{s_{m^2-2}s_{m^2-1}}\}$.
To continue, we have to calculate  eigenvalues of the $(m^2-1)\times (m^2-1)$-dimensional  matrix ${\mathcal T}{\mathcal T}^\T$ which, except for $m=2$, does not have simple form in general. For instance,  for $m=3$ we find ${\mathcal T}{\mathcal T}^\T=\diag\{\tau_{+},\tau_{-},s_{1}s_{2},s_{1}s_{2},s_{1}s_{3},s_{1}s_{3},s_{2}s_{3},s_{2}s_{3}\}$, with
$\tau_{\pm}=\frac{1}{3} \left(\sum_{i=1}^3s_{i}^2\pm\sqrt{\sum_{i=1}^3s_{i}^4-s_1^2s_2^2-s_1^2s_3^2-s_2^2s_3^2}\right)$, which can be used to evaluate $D_{{\mathcal T},f}^{(p)}(\Psi)$. In this case, we have plotted  $D_{{\mathcal T},f}^{(p)}(\Psi)$  for $s=1,2$ and  $f=1$ and $f=\mu$ (see Fig. \ref{Fig3DPure}). For comparison,  square of the normalized geometric discord, namely concurrence $C(\psi)$ given by Eq. (\ref{GDPuremxm}), is also plotted.  As it is evident from this figure, $D_{{\mathcal T},f}^{(p)}(\Psi)$ is a  monotone function of $C(\psi)$, so that it can be regarded as  a  measure of entanglement for pure states.     On the other hand, for  the  maximally entangled states of arbitrary $m$ we have  $s_i=\frac{1}{m}$ for $i=1,\cdots m$, leads to $\vec{x}=\vec{y}=\vec{0}$, $TT^{\T}=\frac{m^2}{4}I_{m^2-1}$; so that ${\mathcal T}{\mathcal T}^{\T}=\frac{1}{m^2}I_{m^2-1}$ and  $[D_{{\mathcal T},f=1}^{(2)}(\Psi)]^2=\frac{m-1}{m}=D_G(\Psi)=[D_{{\mathcal T},\mu}^{(2)}(\Psi)]^2/m$.

\begin{figure}[ht!]
\centering
\includegraphics[width=18cm]{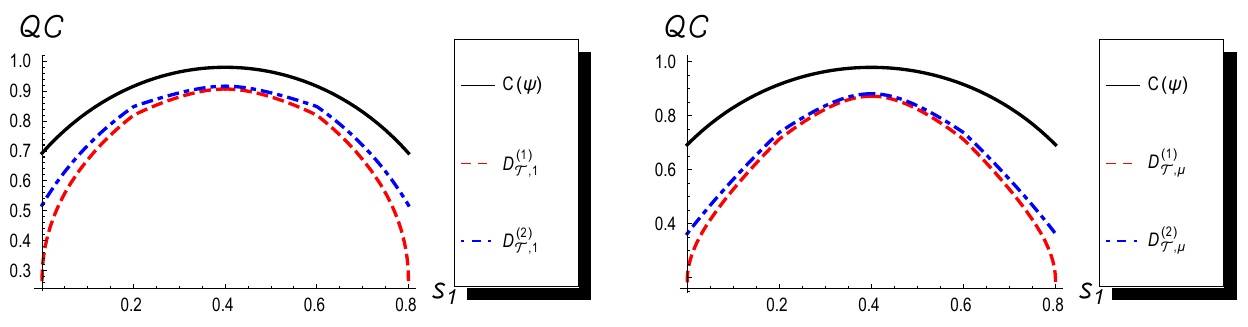}
\caption{(Color online) Quantum correlations of a general  pure state $\ket{\psi}$ (for $m=3$) \textit{vs.}  $s_1$ with $s_2=0.2$. Two special cases  $f=1$ (left) and $f=\mu$ (right) are considered. For a comparison, square of the normalized geometric discord (which is equal to the concurrence $C(\psi)$) is also plotted. All measures are normalized}‎
\label{Fig3DPure}
\end{figure}

{\it A two-parameter class of two-qubit states.---} As an another illustrative example let us consider a two-parameter class of the two-qubit X-states discussed in  \cite{AlQasimiPRA2011}
\begin{eqnarray}\label{Rho-ab}
\rho=\frac{1}{2}\left(\begin{array}{cccc}a & 0 & 0 & a \\ 0 & 1-a-b & 0 & 0 \\ 0 & 0 & 1-a+b & 0 \\ a & 0 & 0 & a\end{array}\right),
\end{eqnarray}
where $0\le a \le 1$ and $a-1 \le b \le 1-a$. The quantum discord of this state is \cite{AlQasimiPRA2011}
\begin{equation}\label{Q-Rho-ab}
Q(\rho)=\min\{a, q\},
\end{equation}
where
\begin{eqnarray}\nonumber
q &=&\frac{a}{2}\log_{2}{\left[\frac{4a^2}{(1-a)^2-b^2}\right]}-\frac{b}{2}\log_{2}{\left[\frac{(1+b)(1-a-b)}{(1-b)(1-a+b)}\right]}\\ \nonumber \label{q}
&-&\frac{\sqrt{a^2+b^2}}{2}\log_{2}{\left[\frac{1+\sqrt{a^2+b^2}}{1-\sqrt{a^2+b^2}}\right]}
+\frac{1}{2}\log_{2}{\left[\frac{4((1-a)^2-b^2)}{(1-b^2)(1-a^2-b^2)}\right]}. \end{eqnarray}
For this state we get ${\mathcal T}{\mathcal T}^\T=\frac{1}{4}\diag(a^2,a^2,(1-2a)^2+b^2)$ and $\mu(\rho^B)=\half (1+b^2)$. Figure (\ref{twoqubit}) compare the behavior of the above geometric measures of quantumness with the quantum discord.

\begin{figure}[ht!]
\centering
\includegraphics[width=18cm]{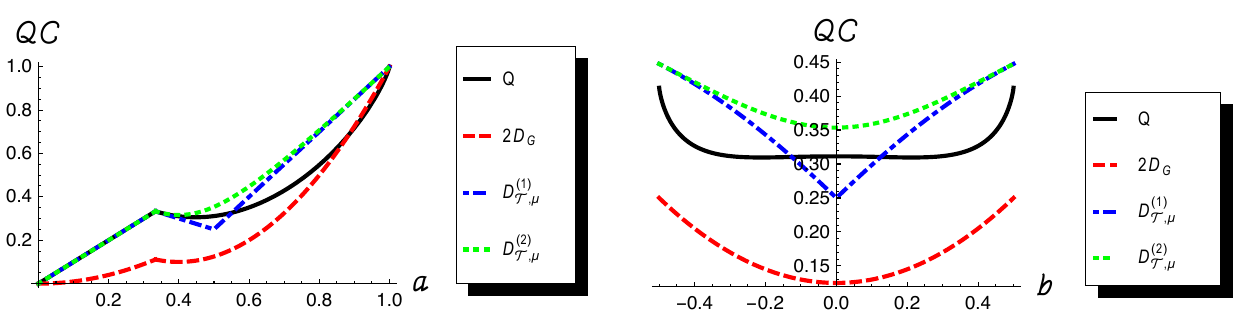}
\caption{(Color online) Quantum correlations \textit{vs.} $a$ for $b=0$ (left), and \textit{vs.} $b$ for $a=0.5$ (right). All measures are normalized.}‎
\label{twoqubit}
\end{figure}

\section{Conclusion}
We have presented a class of computable quantifiers of the quantum correlation for an arbitrary bipartite state. Our measures  are based on the necessary and sufficient condition for a state to be zero-discord. The analytical expression for these measures are  given for any bipartite state. Interestingly, we have shown that this class of measures includes the tight lower bound on the geometric discord, so that this lower bound can be used as an independent indicator  of the quantumness of correlation. We have also introduced a measure of the quantum correlation which is invariant under local quantum channels performed on the unmeasured subsystem. It is also shown that a way to prevent the geometric measure from
increasing under local operations on the unmeasured subsystem is to divide it by the purity of this subsystem. We have provided some examples and exemplified  our measure.

\section*{Acknowledgments}
The authors wish to thank The Office of Graduate
Studies of The University of Isfahan for their support.

\appendix
\section{Geometric discord and its tight lower bound}\label{AppendixTight}
 In this appendix we provide a proof to show that  Eq. (\ref{GD-AlternativeForm2-0}) can be regarded as an alternative form for the geometric discord (\ref{GD-Dakic}).

{\it Alternative form for geometric discord.---}
Let  $\{\ket{k}\}_{k=1}^{m}$ be any orthonormal base for $\mathcal{H}^{A}$.  Following \cite{LuoFu2010}  we represent the projection operators corresponding to  this base as
\begin{eqnarray}\label{Pi-HS1}
\Pi_{k}^{A}=\ket{k}\bra{k}=\sum_{i=0}^{m^2-1}a_{ki}X_{i},
\end{eqnarray}
where $a_{ki}$ are defined in Eq. (\ref{aki}) and $k=1,\cdots,m$. It is easy to see that we can write matrix $A=(a_{ki})$ as below
\begin{equation}
A=\left(\begin{array}{cc}\frac{1}{\sqrt{m}} & \vec{\mu}_1^{\T} \\
\vdots & \vdots \\
\frac{1}{\sqrt{m}} & \vec{\mu}_{m}^{\T}\end{array}\right),
\end{equation}
where $\vec{\mu}_k=\frac{\sqrt{2}}{m}\vec{\alpha}_k$ with $\vec{\alpha}_k$ as defined in Eqs. (\ref{CVR-Pure2}) and (\ref{CVR-Pure3}). Therefore vectors $\{\vec{\mu}_k\}_{k=1}^{m}$ make the  $(m-1)$-dimensional simplex $\Delta^{m-1}_{\{\vec{\mu}_k\}\in \mathbb{R}^{m^2-1}}$.
Using Eq. (\ref{C}) we get
\begin{eqnarray}\label{CCt}
\Tr{(CC^T)}=\frac{1}{mn}\left[\left(1+\frac{2}{n}\vec{y}^{\T}\vec{y}\right) +\frac{2}{m}\Tr{G}\right],
\end{eqnarray}
and
\begin{eqnarray}
[ACC^TA^T]_{kk^{\prime}}=
\frac{1}{m^2n}\left[\left(1+\frac{2}{n}\vec{y}^{\T}\vec{y}\right)+2\vec{\mu}_k^{\T} G\vec{\mu}_{k^{\prime}}
\right]
+\frac{\sqrt{2}}{m^2n}\left[\vec{\mu}_k^{\T}\left(\vec{x}+\frac{2}{n}T\vec{y}\right)+\left(\vec{x}^{\T}+\frac{2}{n}\vec{y}^{\T} T^{\T}\right)\vec{\mu}_{k^{\prime}}\right],
\end{eqnarray}
where $G$ is defined by Eq. (\ref{G}). We find therefore
\begin{eqnarray}\label{ACCtAt}
\Tr{[ACC^TA^T]}=\frac{1}{mn}\left[\left(1+\frac{2}{n}\vec{y}^{\T}\vec{y}\right)  +\frac{2}{m}\sum_{k=1}^{m}\vec{\mu}_k^{\T}G\vec{\mu}_k
\right],
\end{eqnarray}
where we have used the fact that $\sum_{k=1}^{m}\vec{\mu}_k=\vec{0}$.
Substituting Eqs. (\ref{CCt}) and (\ref{ACCtAt}) into Eq. (\ref{GD-Luo1}), we arrive at the following form for the geometric discord
\begin{eqnarray}\label{GD-AlternativeForm2}
D_{G}(\rho)=\frac{2}{m^2 n}\left[\Tr{G}-\max_{\{\vec{\mu}_k\}} \sum_{k=1}^{m}\vec{\mu}_k^{\T} G \vec{\mu}_k\right].
\end{eqnarray}
Here maximum is taken over all simplexes $\Delta^{m-1}_{\{\vec{\mu}_k\}\in \mathbb{R}^{m^2-1}}$, i.e. over  all vectors $\{\vec{\mu}_k\}_{k=1}^{m}\in \mathbb{R}^{m^2-1}$ fulfilling conditions
 $\vec{\mu}_k\cdot\vec{\mu}_{k^\prime}=\left(\delta_{kk^\prime}-\frac{1}{m}\right)$ and $\sum_{k=1}^{m}\vec{\mu}_k=\vec{0}$.
To gain further insight into the meaning of the above equation, it is worth to compare it with Eq.  (\ref{NewGD2}) for $s=2$, $f_1(y)=f_2(y)=1$.
It turns out that   the  calculation of $[D_{{\mathcal T},f=1}^{(2)}(\rho)]^2$ needs   to perform optimization over $(m-1)$-dimensional projection operators $P$, which can be solved exactly, but in calculation of $D_G(\rho)$ we have to make optimization over $(m-1)$-dimensional simplexes $\Delta^{m-1}_{\{\vec{\mu}_k\}\in {\mathbb R}^{m^2-1}}$, where does not have an exact solution in general. Two definitions become identical when $m=2$, namely for $2\otimes n$ systems. This happens because in case $m=2$, calculation of the geometric discord leads  to the problem of optimization over one-dimensional simplexes $\Delta^{1}_{\{\vec{\mu}_1,\vec{\mu}_2\}\in {\mathbb R}^{3}}$ with $\vec{\mu}_1=-\vec{\mu}_2=\frac{1}{\sqrt{2}}\vec{\alpha}_1$ and $|\vec{\alpha}_1|=1$, which is the same as the problem of optimization over one-dimensional projection operators $P$, and  get
\begin{eqnarray}\label{GD-m=2}
[D_{{\mathcal T},f=1}^{(2)}(\rho)]^2=D_{G}(\rho)=\frac{1}{2 n}\left[\Tr{G}-\max_{\vec{\alpha}_1}\{\vec{\alpha}_1^{\T} G \vec{\alpha}_1\}\right]
=\frac{1}{2 n}\left[\Tr{G}-\eta_{1}\right]=\frac{1}{2 n}\left[\eta_2+\eta_3\right].
\end{eqnarray}
where we have defined $\eta_{1}\ge \eta_{2}\ge \eta_{3}\ge 0$ as the eigenvalues of  $G$. This agrees with the result obtained in Refs. \cite{LuoFuPRL2011,Saj2012}.

{\it Tight lower bound on the geometric discord \cite{Rana2012,Hassan2012}.---}
Unfortunately, for $m>2$,  the maximization involved in Eq. (\ref{GD-AlternativeForm2}) can not be solved analytically and we need to obtain lower bound. To do so,  let  $\{\ket{s}\}_{s=1}^{m}$ be the standard base of the  space $\mathcal{H}^{A}$, namely the one which the $SU(m)$ generators $\{\hat{\lambda}^A_i\}_{i=1}^{m^2-1}$ are expanded in terms of them.
Similar to Eq. (\ref{Pi-HS1}), we can write
\begin{eqnarray}
\Pi_{s}^{A}=\ket{s}\bra{s}=\sum_{i=0}^{m^2-1}b_{si}X_{i},
\end{eqnarray}
where
\begin{eqnarray}
b_{si}=\Tr{[\ket{s}\bra{s}X_{i}]}=\bra{s}X_{i}\ket{s},
\end{eqnarray}
for $s=1,\cdots,m$ and $i=0,\cdots,m^2-1$.
Now if we choose the basis of the algebra in such a way that Cartan subalgebra makes the first $m-1$ generators, then we can write matrix $B=(b_{si})$ as follows
\begin{equation}\label{BMatrix}
B=\left(\begin{array}{cc}\frac{1}{\sqrt{m}} & \vec{{\tilde \nu}}_1^{\T}  \\
\vdots & \vdots  \\
\frac{1}{\sqrt{m}} & \vec{{\tilde \nu}}_{m}^{\T} \end{array}\right).
\end{equation}
Here $\{\vec{{\tilde \nu}}_s\}_{s=1}^{m}$ are vectors in ${\mathbb R}^{m^2-1}$ such that only first $m-1$ components of them are nonzero. So, we can write  $\vec{{\tilde \nu}}_s=(\vec{\nu}_s,\vec{0})$ where $\{\vec{\nu}_s\}_{s=1}^{m}$ are vectors in ${\mathbb R}^{m-1}$, and $\vec{0}$ denotes null vectors in ${\mathbb R}^{m(m-1)}$. It is worth to mention that  vectors $\{\vec{\nu}_s\}_{s=1}^{m}$ are in fact weight vectors of the $SU(m)$ Lie algebra in the defining representation \cite{GeorgiBook1999} and satisfy the following orthonormality condition
\begin{equation}\label{Nu-Orthonormality}
\sum_{s=1}^{m}(\vec{\nu}_s)_k(\vec{\nu}_s)_l=\delta_{kl}.
\end{equation}
In view of this, the zero vectors $\vec{0}$ of the definition $\vec{{\tilde \nu}}_s=(\vec{\nu}_s,\vec{0})$ arise from  the diagonal elements of the  root operators of the algebra, which are all zero. Therefore vectors $\{\vec{\tilde{\nu}}_s\}_{s=1}^{m}$ makes simplex $\Delta^{m-1}_{\{\vec{\tilde{\nu}}_s\}\in \mathbb{R}^{m^2-1}}$, or equivalently  simplex $\Delta^{m-1}_{\{\vec{\nu}_s\}\in \mathbb{R}^{m-1}}$.
Evidently,  the general base $\{\ket{k}\}_{k=1}^{m}$ can be obtained from the standard one by a unitary transformation $U\in SU(m)$ as
$\{\ket{k}\}=U\{\ket{s}\}$. Corresponding to this, there exists orthogonal transformation $\tilde{R}\in SO(m^2-1)$ such that the general simplex $\Delta^{m-1}_{\{\vec{\mu}_k\}\in \mathbb{R}^{m^2-1}}$ can be obtained from $\Delta^{m-1}_{\{\vec{\tilde{\nu}}_s\}\in \mathbb{R}^{m^2-1}}$,  i.e.
\begin{eqnarray}
(\vec{\mu}_k)_i=\sum_{j=1}^{m^2-1}{\tilde R}_{ij}(\vec{\tilde{\nu}}_k)_j=\sum_{j=1}^{m-1}R_{ij}(\vec{\nu}_k)_j,
\end{eqnarray}
for $i=1,2,\cdots, m^2-1$. In the second equality $R=(R_{ij})=(\hat{n}_j)_i$ is an $(m^2-1)\times(m-1)$ left orthogonal matrix \cite{Rana2012,Hassan2012}, i.e. $R^{\T}R=I_{m-1}$, and $\hat{n}_j\in \mathbb{R}^{m^2-1}$ ($j=1,\cdots,m-1$) are orthonormal vectors, i.e. $\hat{n}_i\cdot \hat{n}_{i^\prime}=\delta_{ii^\prime}$. Using this and Eq. (\ref{Nu-Orthonormality}), we get
\begin{eqnarray}\label{etai}
\max_{\{\vec{\mu}_k\}} \sum_{k=1}^{m}\vec{\mu}_k^{\T} G \vec{\mu}_k\le \max_{\{\hat{n}_i\}} \sum_{i=1}^{m-1}\hat{n}_i^{\T} G \hat{n}_i =  \sum_{i=1}^{m-1}\eta_{i}^\downarrow,
\end{eqnarray}
where $\{\eta_k^{\downarrow}\}_{k=1}^{m^2-1}$ are eigenvalues of $G$ in nonincreasing order.
Using this in Eq. (\ref{GD-AlternativeForm2}), we find the  desired lower bound (\ref{TightLowerBound}) for the geometric discord, which is already obtained in Refs. \cite{Rana2012,Hassan2012}. It is worth to mention that in the particular case  $m=2$, the obtained bound gives exact result for the geometric discord (see Eq. (\ref{GD-m=2})). This follows from the homomorphism $SU(2)\sim SO(3)$, happens only for $m=2$.
 On the other hand, for $m>2$ the set of all unitary transformations $U\in SU(m)$ acting on the $m$-dimensional Hilbert space $\mathcal{H}^A$ will be a subset of the matrices in $SO(m^2-1)$. This implies that there exist rotations $\tilde{R}\in SO(m^2-1)$ that are not correspond to any $U\in SU(m)$, leading therefore to the inequality (\ref{etai}).

\section{A proof for Theorem \ref{TheoremZDS-NC}}\label{AppendixProof}
In this appendix we provide a proof for theorem \ref{TheoremZDS-NC}. To this aim, we need the following lemma.

\begin{lemma}\label{Lemma-ZDS-CVR-xyT}
 (i) If $\rho$ is a zero-discord state on the space $\mathcal{H}^{A}\otimes\mathcal{H}^{B}$, then its corresponding local coherence vectors $\vec{x}$, $\vec{y}$, and the correlation matrix $T$ can be represented by the following equations
\begin{eqnarray}\label{ZDS-CVR-xy}
\vec{x}&=&\sum_{k=1}^{m}p_k \vec{\alpha}_{k},\qquad \vec{y}=\sum_{k=1}^{m}p_k \vec{\xi}_{k}, \\ \label{ZDS-CVR-T}
T&=&\sum_{k=1}^{m}p_k (\vec{\alpha}_{k})(\vec{\xi}_{k})^{\T},
\end{eqnarray}
where $\{\vec{\alpha}_{k}\}_{k=1}^m$ denote coherence vectors associated to the orthonormal projection operators of the subsystem $A$, hence satisfy Eqs. (\ref{CVR-Pure2}) and (\ref{CVR-Pure3}), but $\{\vec{\xi}_{k}\}_{k=1}^m$  are coherence vectors of arbitrary states of the subsystem $B$.

(ii) If $\rho$ is an arbitrary bipartite state, then its corresponding local coherence vectors $\vec{x}$ and $\vec{y}$ can be represented by Eq. (\ref{ZDS-CVR-xy}).
\end{lemma}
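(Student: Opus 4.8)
The plan is to prove part (i) by direct substitution of the Bloch (coherence-vector) representations into the canonical form of a zero-discord state and matching coefficients, and to prove part (ii) by spectral decomposition of the reduced states.

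For part (i): I would start from the defining form of a classical-quantum state, $\rho=\sum_{k=1}^{m}p_k\,\Pi_k^A\otimes\rho_k^B$, as in Eq.~(\ref{ZDS}). Since $\{\Pi_k^A\}_{k=1}^m$ is a complete set of mutually orthogonal rank-one projectors, each admits the representation $\Pi_k^A=\frac{1}{m}(\Id+\vec{\alpha}_k\cdot\hat{\lambda}^A)$ with the coherence vectors $\{\vec{\alpha}_k\}_{k=1}^m$ obeying Eqs.~(\ref{CVR-Pure2}) and (\ref{CVR-Pure3}), while each $\rho_k^B$ is a state on $\mathcal{H}^B$ and hence $\rho_k^B=\frac{1}{n}(\Id+\vec{\xi}_k\cdot\hat{\lambda}^B)$ for some coherence vector $\vec{\xi}_k$. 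Substituting these into $\rho$, expanding the tensor products into the four pieces $\Id\otimes\Id$, $(\vec{\alpha}_k\cdot\hat{\lambda}^A)\otimes\Id$, $\Id\otimes(\vec{\xi}_k\cdot\hat{\lambda}^B)$, and $(\vec{\alpha}_k\cdot\hat{\lambda}^A)\otimes(\vec{\xi}_k\cdot\hat{\lambda}^B)$, and using $\sum_k p_k=1$ to collapse the $\Id\otimes\Id$ term, one arrives at $\rho$ written exactly in the form of Eq.~(\ref{Rho-Bipart2}) with $\vec{x}=\sum_k p_k\vec{\alpha}_k$, $\vec{y}=\sum_k p_k\vec{\xi}_k$, and $t_{ij}=\sum_k p_k(\vec{\alpha}_k)_i(\vec{\xi}_k)_j$. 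Because the operators $\{\Id,\hat{\lambda}_i^A\}\otimes\{\Id,\hat{\lambda}_j^B\}$ are linearly independent, this identification of expansion coefficients is unique, which yields Eqs.~(\ref{ZDS-CVR-xy}) and (\ref{ZDS-CVR-T}) with $T=\sum_k p_k\vec{\alpha}_k\vec{\xi}_k^{\T}$.

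For part (ii): Given an arbitrary bipartite state $\rho$, I would pass to the reduced states $\rho^A=\Tr_{B}\rho$ and $\rho^B=\Tr_{A}\rho$, whose coherence vectors are precisely $\vec{x}$ and $\vec{y}$ (since $x_i=\frac{m}{2}\Tr[(\hat{\lambda}_i^A\otimes\Id)\rho]=\frac{m}{2}\Tr[\hat{\lambda}_i^A\rho^A]$, and similarly for $\vec{y}$). Spectrally decomposing $\rho^A=\sum_k p_k\Pi_k^A$ into an orthogonal resolution of rank-one projectors and writing each $\Pi_k^A$ as in Eq.~(\ref{CVR-Pure1}) gives $\vec{x}=\sum_k p_k\vec{\alpha}_k$ with the $\{\vec{\alpha}_k\}$ being coherence vectors of an orthonormal basis of $\mathcal{H}^A$; the identical argument applied to $\rho^B$ produces the corresponding representation of $\vec{y}$.

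I do not expect a genuine obstacle here: the argument is essentially bookkeeping built on the Bloch parametrization. The only points requiring care are the appeal to linear independence of the tensor-product operator basis when reading off the coefficients in part (i), and the observation that in part (ii) the representations of $\vec{x}$ and $\vec{y}$ generically involve different weights and different simplexes (the eigenvalues and eigenbases of $\rho^A$ and $\rho^B$ need not agree); the shared notation in Eq.~(\ref{ZDS-CVR-xy}) should be read as asserting only the common structural form, which is all that is used in the proof of Theorem~\ref{TheoremZDS-NC}.
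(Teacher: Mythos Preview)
Your treatment of part (i) is correct and mirrors the paper's proof: insert the coherence-vector representations of $\Pi_k^A$ and $\rho_k^B$ into the zero-discord decomposition and read off $\vec{x}$, $\vec{y}$, $T$ by comparing with Eq.~(\ref{Rho-Bipart2}).

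For part (ii), however, your argument does not establish the lemma as stated. You spectrally decompose $\rho^A$ and $\rho^B$ \emph{independently}, which produces $\vec{x}=\sum_{k=1}^{m}p_k\vec{\alpha}_k$ but $\vec{y}=\sum_{l=1}^{n}q_l\vec{\xi}_l$ with \emph{different} weights $q_l$, a \emph{different} number of terms $n$, and with the $\vec{\xi}_l$ being coherence vectors of orthonormal projectors rather than of arbitrary states. Equation~(\ref{ZDS-CVR-xy}) literally asserts the \emph{same} $p_k$ and $m$ terms in both sums, and this is what the paper proves: after fixing $\{p_k\}$ from the spectral decomposition of $\rho^A$, one chooses an ensemble $\{p_k,\rho_k^B\}_{k=1}^{m}$ that realizes $\rho^B=\sum_k p_k\rho_k^B$ and takes $\vec{\xi}_k$ to be the coherence vectors of these (in general mixed and non-unique) $\rho_k^B$.

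Your closing remark that ``only the common structural form'' is used in Theorem~\ref{TheoremZDS-NC} is not quite right. In the converse direction of that proof the shared weights $p_k$ and the non-uniqueness of the $\vec{\xi}_k$ \emph{for those fixed $p_k$} are invoked explicitly when one rewrites $p_k\vec{\xi}_k=\sum_l p_{kl}\vec{\eta}_l$ to match $T$. Your separate spectral decomposition of $\rho^B$ does not supply this freedom. The fix is simple: keep your argument for $\vec{x}$, then observe that for any probability vector $\{p_k\}$ with $\sum_k p_k=1$ one can always write $\rho^B=\sum_{k=1}^{m}p_k\rho_k^B$ for some states $\rho_k^B$ (for instance $\rho_k^B=\rho^B$ whenever $p_k>0$), which gives $\vec{y}=\sum_{k=1}^{m}p_k\vec{\xi}_k$ with the same $p_k$.
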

\begin{proof}
(i) Use  the coherence vector representations for $\Pi_k^A$ and $\rho_k^B$ as
\begin{eqnarray}
\Pi_k^{A}=\frac{1}{m}\left(\Id+\vec{\alpha}_{k}\cdot \hat{\lambda}^A\right),\quad
\rho_k^{B}=\frac{1}{n}\left(\Id+\vec{\xi}_{k}\cdot \hat{\lambda}^B\right),
\end{eqnarray}
and insert them in the definition of zero-discord state (\ref{ZDS}). Comparing the result with the definition of $\rho$ given in Eq. (\ref{Rho-Bipart2}), one can obtain the coherence vectors $\vec{x}$, $\vec{y}$ and the correlation matrix $T$ as given by Eqs. (\ref{ZDS-CVR-xy}) and (\ref{ZDS-CVR-T}).

(ii) Let $\rho^{A}=\sum_{k=1}^{m}p_k\Pi_k^A$, with $\{\Pi_k^A\}_{k=1}^{m}$ orthonormal projections on $\mathcal{H}^{A}$, be the eigenspectral decomposition of $\rho^{A}$. Then denoting coherence vectors of $\{\Pi_k^A\}_{k=1}^{m}$ by $\{\vec{\alpha}_k\}_{k=1}^{m}$, we find that  $\vec{x}=\sum_{k=1}^{m}p_k \vec{\alpha}_{k}$. Now having $\{p_k\}_{k=1}^{m}$, we can always find set $\{\rho_k^B\}_{k=1}^{m}$ such that ensemble $\{p_k,\rho_k^B\}_{k=1}^{m}$ realizes $\rho^B$, i.e. $\rho^B=\sum_{k=1}^{m}p_k\rho_k^B$. Now letting $\{\vec{\xi}_k\}_{k=1}^{m}$ be coherence vectors of $\{\rho_k^B\}_{k=1}^{m}$, we get  $\vec{y}=\sum_{k=1}^mp_k\vec{\xi}_k$. Note that for a given probability set $\{p_k\}_{k=1}^{m}$, states $\{\rho_k^B\}_{k=1}^{m}$ which realize  $\rho^B$ are not unique,  so  associated coherence vectors  $\{\vec{\xi}_k\}_{k=1}^{m}$ are not unique too.
\end{proof}
Now we are in a position to present the proof for theorem \ref{TheoremZDS-NC}. If $\rho$ is a zero-discord state, then by lemma \ref{Lemma-ZDS-CVR-xyT} its corresponding local coherence vectors $\vec{x}$, $\vec{y}$ and correlation matrix $T$ can be represented by Eqs. (\ref{ZDS-CVR-xy}) and (\ref{ZDS-CVR-T}), with $\{\vec{\alpha}_k\}_{k=1}^{m}$ as coherence vectors corresponding to orthonormal projections. Defining ${P}$ as (\ref{Projection}) and using the properties $\{\vec{\alpha}_k\}_{k=1}^{m}$ given in Eq.  (\ref{CVR-Pure2}), one can easily shows that conditions (\ref{ZDS-Pi-xT}) are satisfied.
Conversely, we have to proof that if Eq.  (\ref{ZDS-Pi-xT}) is satisfied, then $\rho$ is a zero-discord state, i.e. its corresponding $\vec{x}$, $\vec{y}$ and $T$ have the form given by Eqs. (\ref{ZDS-CVR-xy}) and (\ref{ZDS-CVR-T}).
To do this, we first note that Eq. (\ref{ZDS-CVR-xy}) is satisfied for a general state $\rho$.
But by assumption Eq. (\ref{ZDS-Pi-xT}) is also satisfied, leading therefore  to the following form for the correlation matrix $T$
\begin{eqnarray}\label{T-p-alpha-eta}
T=\sum_{k=1}^{m}\sum_{l=1}^{m}p_{kl}(\vec{\alpha}_{k})(\vec{\eta}_{l})^{\T}.
\end{eqnarray}
Since $\{\vec{\xi}_i\}_{i=1}^m$ are not unique, we can therefore choose them in such a way that  they can be expanded in terms of $\{\vec{\eta}_l\}_{l=1}^m$ as
$p_k\vec{\xi}_k=\sum_{l=1}^m p_{kl}\vec{\eta}_l$. Substituting this into Eq. (\ref{T-p-alpha-eta}) we get Eq. (\ref{ZDS-CVR-T}), therefore   $\vec{x}$, $\vec{y}$  and $T$ take  the form given by Eqs.  (\ref{ZDS-CVR-xy}) and (\ref{ZDS-CVR-T}), hence $\rho$ is a zero-discord state.



\end{document}